\newcommand{\ourmethod}{\textsf{TabularMark}\xspace}
\newcommand{\partitle}[1]{\medskip \noindent \textbf{#1.}}
\newcommand{\nop}[1]{}
\newcommand{\blueno}[1]{{{\textcolor{blue}{#1}}}}
  \providecommand\BibTeX{{%
    \normalfont B\kern-0.5em{\scshape i\kern-0.25em b}\kern-0.8em\TeX}}}
\begin{document}

\title[Watermarking Tabular Datasets for Machine Learning]{\ourmethod: Watermarking Tabular Datasets for \\Machine Learning}

%%
%% The "author" command and its associated commands are used to define
%% the authors and their affiliations.
%% Of note is the shared affiliation of the first two authors, and the
%% "authornote" and "authornotemark" commands
%% used to denote shared contribution to the research.

\author{Yihao Zheng}
\affiliation{
    \institution{Zhejiang University}
    \country{}
}
\email{zhengyihao@zju.edu.cn}

\author{Haocheng Xia}
\affiliation{
    \institution{University of Illinois Urbana-Champaign}
    \country{}
}
\email{hxia7@illinois.edu}

\author{Junyuan Pang}
\affiliation{
    \institution{Zhejiang University}
    \country{}
}
\email{junyuanpang@zju.edu.cn}

\author{Jinfei Liu}
% \authornote{Corresponding author.}
\affiliation{
    \institution{Zhejiang University}
    \country{}
}
\email{jinfeiliu@zju.edu.cn}

\author{Kui Ren}
\affiliation{
    \institution{Zhejiang University}
    \country{}
}
\email{kuiren@zju.edu.cn}

\author{Lingyang Chu}
\affiliation{
    \institution{McMaster University}
    \country{}
}
\email{chul9@mcmaster.ca}

\author{Yang Cao}
\affiliation{
    \institution{Tokyo Institute of Technology}
    \country{}
}
\email{cao@c.titech.ac.jp}

\author{Li Xiong}
% \authornote{Corresponding author.}
\affiliation{
    \institution{Emory University}
    \country{}
}
\email{lxiong@emory.edu}

%%
%% By default, the full list of authors will be used in the page
%% headers. Often, this list is too long, and will overlap
%% other information printed in the page headers. This command allows
%% the author to define a more concise list
%% of authors' names for this purpose.
\renewcommand{\shortauthors}{Trovato and Tobin, et al.}

\vspace{+5em}
%%
%% The abstract is a short summary of the work to be presented in the
%% article.
\begin{abstract}
%The immense value created by data-driven machine learning (ML) has led to soaring demand for premium datasets. As a crucial role in constructing ML models, {tabular} datasets have become valuable data assets. However, the ease of replicating {tabular} datasets poses security risks, attracting malicious data pirates and deterring owners from sharing, which could suppress data circulation {and transactions}. 
% A watermark is an important tool
%{As a de facto standard, 
Watermarking is broadly utilized to protect ownership of shared data while preserving data utility.
% strong semantic image and audio datasets through embedding and detection schemes but neglecting tabular datasets 
However, existing watermarking methods for tabular datasets fall short on the desired properties (detectability, non-intrusiveness, and robustness) and only preserve data utility from the perspective of data statistics, ignoring the performance of downstream ML models trained on the datasets.
% introduces the key question
Can we watermark tabular datasets without significantly compromising their utility for training ML models while preventing attackers from training usable ML models on attacked datasets?
% introduce our watermarking scheme

In this paper, we propose a hypothesis testing-based watermarking scheme, \texttt{\ourmethod}. Data noise partitioning is utilized for data perturbation during embedding, which is adaptable for numerical and categorical attributes while preserving the data utility. For detection, a custom-threshold one proportion z-test is employed \nop{to determine the presence of the watermark}, which can reliably determine the presence of the watermark. Experiments on real-world and synthetic datasets
 % \sout{demonstrating the effectiveness and compatibility of the proposed watermark scheme}
 {demonstrate the superiority of \ourmethod in detectability, non-intrusiveness, and robustness.% of \ourmethod compared to existing works
 } 
 % \todo{More detailed and attractive: for example ``\textsf{TabularMask} protects datasets from common attacks successfully while reducing the data utility (i.e., corresponding model performance) loss from 30\% to 5\%.''}

% {For example, \ourmethod successfully protects evaluated datasets from common malicious attacks while negligibly reducing their utility.}
% Experiment 
\end{abstract}

\ccsdesc[500]{Security and privacy}

%%
%% This command processes the author and affiliation and title
%% information and builds the first part of the formatted document.
\maketitle

\section{Introduction}\label{sec:intro}

% 1. dataset utility to model
% \todo{what is watermark, it is important, tabular dataset is important; some existing work for tabular dataset, however, no consider ML utility}

% \todo{goals for watermark considering ML utility}

% \todo{Gaps and Challenges}

The proliferation of machine learning (ML) has brought significant benefits to a wide range of industries, such as healthcare~\cite{DBLP:journals/cbm/RasheedQGARQ22, jiang2017artificial, wiens2018machine}, retail~\cite{DBLP:journals/corr/abs-2008-07779, chen2017sales, narayana2021machine}, and finance~\cite{DBLP:journals/mansci/HuangS23, dixon2020machine, ozbayoglu2020deep}.  Structured tabular datasets (or relational data) are prevalent and used across many sectors. However, the nature of datasets allows for near-zero cost replication~\cite{rhind1992data}, making them susceptible to unauthorized copying and use. There is a pressing need for a robust mechanism to assert and protect the ownership of such datasets. 

Watermarking is a widely adopted technique for asserting ownership and preventing unauthorized usage of shared data and have been widely employed on multimedia data such as images~\cite{zhong2020automated, ahmadi2021intelligent,DBLP:journals/corr/abs-2311-13713} and audios~\cite{zhang2022robust, yamni2022efficient}, and relational data~\cite{DBLP:journals/corr/abs-1801-08271}. A watermark is embedded into original data usually through subtle perturbations. Data owners can extract the watermark from suspicious data to claim ownership. %Watermarking techniques  which effectively facilitates data circulation and the development of data market~\cite{aws_dataset_exchange}.
% , and relational data~\cite{agrawal2002watermarking, hwang2020reversible}.
% \todo{watermark on image, audio, and also relational data}
Typically, there are several minimum desired properties of watermarking. 1) detectability: it can be reliably detected, with the aid of some secret information; 2) non-intrusiveness: it should not alter the data in a way that degrades its quality or usability; 3) robustness: it should be resilient to manipulations.  

 Tabular data poses unique challenges due to its data characteristics considering the desired properties above. 1) Tabular data typically consists of precise values with each entry carrying significant and specific information. There is little to no perceptual redundancy compared to multimedia data, which makes it less flexible for designing a watermark that satisfies both detectability and non-intrusiveness, since even minor changes may significantly impact data integrity or usability. 2) The mixture of different data types, including categorical and numerical, may require different or more complex watermarking strategies. 3) Tabular data can undergo a variety of data manipulations like insertions, deletions, and foreign key replacements, and a watermark must be resilient to such operations without being easily removed. 

%Many 
Watermarking schemes~\cite{zhong2020automated, ahmadi2021intelligent,DBLP:journals/corr/abs-2311-13713,zhang2022robust, yamni2022efficient} that mainly focus on multimedia are well-studied in the literature but are difficult to port over for tabular datasets, due to the dependency of intrinsic patterns or semantic information for a specific multimedia type. In order to extend watermarking outside multimedia, many watermarking schemes~\cite{agrawal2002watermarking, hwang2020reversible, li2022secure, shehab2007watermarking, DBLP:journals/tkde/SionAP04, kamran2013robust, hu2018new, lin2021lrw} have been proposed specifically for relational data. However, they are either not generally applicable to all types of tabular data, or fall short in one or more of the above properties. \cite{agrawal2002watermarking,hwang2020reversible} embed watermarks by modifying the least significant bits (LSBs), causing them to be inapplicable to categorical attributes. For example, embedding bits into encoded categorical attributes (e.g., 0-6) can cause significant distortion and may exceed the value range (e.g., modifying 0110 to 0111), violating \textit{non-intrusiveness}. Schemes proposed in ~\cite{DBLP:journals/tkde/SionAP04,shehab2007watermarking} opt to embed watermark bits into the statistics of the data. However, they necessitate the use of primary keys in the partitioning algorithms. Primary keys are not essential in tabular datasets, and replacing the original primary key with a new column will prevent the correct extraction of the watermarking information, violating \textit{robustness}. In recent years, studies in~\cite{hu2018new,li2020robust} focus on reversible watermarking schemes based
on histogram shifting. These methods are limited to integer attributes and unsuitable for floating-point attributes~\cite{misc_wine_quality_186,misc_abalone_1,misc_auto_mpg_9}.
They also lack a theoretical guarantee on the false positive rate for watermark detection, which weakens the \textit{detectability}. Furthermore, nearly all existing methods~\cite{agrawal2002watermarking, hwang2020reversible, li2022secure, shehab2007watermarking, DBLP:journals/tkde/SionAP04, kamran2013robust, lin2021lrw} primarily measure data utility (for non-intrusiveness) on the basic statistics of data, such as mean and variance for query tasks. Given the current prevalent use of tabular data for building ML models, the added watermark should almost not affect the utility of the downstream models, which we refer to as ML utility. 

In this paper, we address the limitations of existing techniques by proposing a simple yet effective hypothesis testing-based watermarking scheme for tabular datasets, \ourmethod. This scheme partitions data noise into two divisions and introduces designed perturbations to specific cells in the embedding phase. In the detection phase, the deviation distribution characteristics in the suspicious datasets are examined by hypothesis testing.

To ensure detectability, we utilize the one-proportion z-test to detect the perturbations introduced in the embedding phase. As different datasets are assumed to be collected independently, the deviation between a non-watermarked dataset and the original dataset
%For data of non-watermarked datasets at perturbed locations, 
%the deviations between the data in 
should be entirely random within the predefined range. Therefore, we use the one-proportion z-test to analyze the characteristics of deviation distribution, which allows for reliable detection of watermarks in watermarked data with a statistically improbable rate of false positives. Additionally, we can adjust the z-test threshold to statistically limit the false positive rate.

% \todo{simplify expression}
To ensure non-intrusiveness, we control the distortion on the ML utility in the embedding phase by managing the number of perturbed cells called \textit{key cells}. Due to randomness in deviation, the probability of the data deviation for a given cell falling within one division is $0.5$. Consequently, as the number of key cells increases, the probability of all of them falling within the chosen division exponentially decreases.  Therefore, compared to the total number of cells, only very few key cells are required to embed a robust watermark. Because the deviation does not rely on a specific data type, the embedding and detection of \ourmethod can be applied across different types of attributes. Whether numerical attributes or categorical attributes are perturbed, the ML utility of watermarked datasets remains almost unchanged thanks to the small number of perturbed cells.

To ensure robustness, we require the data owner to keep the relevant information from the watermark embedding phase confidential, such as the locations of key cells. Without knowing the key cells, attackers can only randomly perturb a large number of cells, much more than the number of key cells, in an attempt to reduce the z-score, an indicator of the z-test.  Therefore, the ML utility of attacked datasets may be significantly reduced. Additionally, we employ the most significant bits (MSBs) of multiple attributes to match key cells in detected datasets to avoid primary key replacement attacks.

Experiments on commonly used real-world datasets validate that the watermark can be reliably detected in watermarked datasets. Additionally, the scheme is proven to be non-intrusive to the ML utility of watermarked datasets and robust against malicious attacks, including insertion, deletion, and alteration attacks. For example, the z-score on the Forest Cover Type increases by 18.6 after watermarking, far exceeding the threshold of 1.96. Even if attackers insert or delete up to 80\% of the tuples from the watermarked Forest Cover Type dataset, the watermark can still be detected. Besides, if attackers successfully remove the watermark from the Forest Cover Type dataset by alteration attacks, the XGBoost model trained on the attacked dataset results in an average drop of 0.245  in the $F_1$-score compared to the model trained on the original dataset. However, the watermarked dataset only causes an average decrease of 0.001 in the $F_1$-score, thereby preserving the ML utility of the dataset. 
% These results are all based on the average results of our 50 experiments.
%our contribution
We briefly summarize our contributions as follows.
\begin{itemize}[leftmargin=*]
   
    % \item We present the first watermarking scheme to protect the ownership of tabular datasets used for machine learning model training. 
    \item We propose a simple yet effective hypothesis testing-based watermarking scheme \ourmethod for tabular datasets using domain partition and one proportion z-test. %that almost retains the ML utility, whereas removing the watermark imposes a significant cost for attackers. 
    To our best knowledge, this is the first study that utilizes random deviation to almost completely preserve the ML utility of tabular datasets.% \ourmethod perturbs key cells to embed watermarks and keeps key cells confidential, making it difficult for attackers to identify and erase watermarks, thus making our watermark robust. Furthermore, the[]'reliability and non-intrusiveness of the scheme are ensured by using a one proportion z-test, which allows for watermark detection with only a small number of key cells.
    
    % \item  We secretly divide the perturbation range into two types and only select noise from one of them and keep key cells confidential, making it difficult for attackers to identify and erase watermarks, thus making our watermark robust. Besides, by utilizing a one-proportion $Z$-test, the scheme only needs to add noise to a small number of key cells, making our watermark non-intrusive. 

    \item To enhance robustness, we employ a hypothesis testing method and multi-attribute matching to mitigate dependencies on data types and primary keys, respectively. Furthermore, we model the watermark removal and mathematically prove the hardness for attackers to remove the watermark.
    %provide two optimizations for \ourmethod to alleviate the dependency of data types and primary keys. First, we perturb key cells by replacing the original data with data featuring specific deviation values rather than embedding watermark information into certain bits or digits of the original data. Second, we employ a matching algorithm to locate the key cells in detected datasets. 

    \item We demonstrate the detectability, non-intrusiveness, and robustness of \ourmethod on popular real-world datasets for regression and classification tasks as well as explore the trade-offs among multiple hyperparameters of the scheme on synthetic datasets.

\end{itemize}

% \todor{Organization}
% The rest of the paper is organized as follows. Section~\ref{sec:relatedWork} presents the related work. We detail the watermark embedding and detection algorithms in Section~\ref{sec:algorithm}. In Section~\ref{sec:experiments}, we report the experimental results and findings, including the verification of three key goals, trade-offs of critical parameters, and a strategy to optimize \ourmethod. Finally, Section~\ref{sec:conclusion} concludes the paper and discusses future research. 

\section{Related Work}\label{sec:relatedWork}
%\todor{In this section, we briefly review the database watermark and ...}
In this section, we review related research on relational database watermarking~\cite{agrawal2002watermarking, hwang2020reversible, li2022secure, shehab2007watermarking, DBLP:journals/tkde/SionAP04, kamran2013robust, hu2018new, lin2021lrw} and discuss the corresponding limitations.
% Relational database watermarking~\cite{agrawal2002watermarking, hwang2020reversible, li2022secure, shehab2007watermarking, DBLP:journals/tkde/SionAP04, kamran2013robust, hu2018new, lin2021lrw} is the most related work. 
Although there are other types of watermarking schemes, such as image watermarking~\cite{zhong2020automated, DBLP:journals/corr/abs-2311-13713, ahmadi2021intelligent}, audio watermarking~\cite{zhang2022robust, yamni2022efficient}, and neural network watermarking~\cite{DBLP:conf/kdd/QinYCLDC23,DBLP:conf/icc/WangZCH23,styleauditor2024}, we do not discuss these related works in detail due to the significant differences in data structure from tabular datasets. 
% And we focus on the widely used tasks of classification and regression within the realm of ML.

% \blueno{Relational data is composed of rows (tuples) and columns (attributes), forming structured data tables, which are different from the structures of multimedia files such as images and audio. Data points of multimedia files, such as the pixels in an image or the waveforms in audio, have a high spatiotemporal correlation, providing a natural basis for watermark embedding. In contrast, each tuple in relational data may not have an obvious or direct correlation with another tuple in the dataset. Therefore, compared to multimedia watermarking\cite{DBLP:conf/ispa/LiuFW0WZ21, wei2023preventing, kim2020digital}, the development of relational data watermarking has been significantly later.}

\partitle{Non-Reversible Database Watermarking} The first watermarking scheme~\cite{agrawal2002watermarking} for relational databases was proposed in 2002. They embedded watermarks by modifying the LSB (least significant bit) of certain attributes in some tuples and decided whether to embed a `0' or `1' in the LSB based on the parity of a hash value computed from the concatenated primary keys and private keys. Inspired by this work, subsequent studies~\cite{xiao2007second, hamadou2011weight} improved on embedding a single bit by embedding multiple bits within the selected LSBs. However, these methods are less suitable for categorical attributes, as encoded categorical attributes are mostly small-range integers (e.g., 0-6), offering limited watermarking capacity and potentially causing undesirable distortion.
Another type of relational database watermarking method opts to embed watermark bits into the statistics of the data. ~\citet{DBLP:journals/tkde/SionAP04} used marker tuples to partition the tuples into different subsets, embedding the corresponding watermark bits in various subsets by modifying subset-related statistical information and utilizing a majority voting mechanism to enhance robustness. ~\citet{shehab2007watermarking} improved upon ~\cite{DBLP:journals/tkde/SionAP04} by first optimizing the partitioning algorithm, avoiding the use of marker tuples for subset partitioning and instead using hash values based on primary keys and private keys, effectively resisting insertion and deletion attacks. Furthermore, to minimize distortion, they modeled the embedding of watermark bits as an optimization problem and offered a genetic algorithm and a pattern search method to solve this problem, effectively controlling data distortion. However, the method~\cite{shehab2007watermarking} is strictly limited by the high requirements for data distribution, and it is challenging to define the optimization problem for categorical attributes with discrete and fixed value ranges. Furthermore, it necessitates the use of primary keys for the partitioning algorithm.
% Due to the poor robustness of LSB methods, ~\citet{li2022secure} abandoned the LSB approach and embedded watermark bits into the digits. Methods~\cite {DBLP:conf/vldb/AgrawalK02, li2022secure} of LSB type or similar to LSB type are less suitable for categorical attributes, as encoded categorical attributes are mostly small-range integers (such as 0-5, 0-7), offering limited watermarking capacity and potentially causing undesirable distortion.}

% \todo{Add intro to categorical watermarking}
% \blueno{\citet{lin2021lrw} proposed a method for embedding watermarks on categorical attributes. The method employs creating two reference sets composed of categorical attributes with different cases (e.g., XL and xl), followed by selecting tokens from these sets to replace the original tokens of the categorical attributes. However, this scheme may introduce undesired distortion and relies on the case-sensitive character information of the categories. Consequently, its applicability on other types of datasets is limited. }

\partitle{Reversible Database Watermarking} In recent years, many studies have focused on reversible watermarking schemes, where the watermark can be extracted from the watermarked data and the original data completely restored. Watermarking techniques based on histogram shifting are a promising solution. ~\cite{hu2018new} introduced a reversible watermark algorithm based on histogram shifting in groups. This method also utilized a message authentication code (MAC) calculated with a private key and a primary key for grouping, defined a statistical quantity ``prediction error'' for plotting histograms, and shifted the histogram to embed watermark bits in each group. Similar histogram-based methods include~\cite{li2020robust}, which proposed a robust reversible watermarking mechanism based on consecutive columns in histograms. Methods like those in~\cite{hu2018new, li2020robust} are limited to integer attributes and are not suitable for floating-point attributes which are widely present in tabular datasets for regression tasks. Thus, their application scenarios are constrained. Due to the poor robustness of LSB methods, \citet{li2022secure} proposed a reversible watermarking method by abandoning the LSB approach and embedding watermark bits into the decimal digits. However, it still has the same issues as the LSB methods.

In addition to the above limitations, existing relational database watermarking schemes ignore the effect on the performance of downstream ML models trained on the datasets but only consider changes in the mean or standard deviation of certain attributes for traditional tasks such as query answering. 
The target of this paper is to reduce the ML utility cost of data owners by limiting the impact of schemes on models trained on the watermarked dataset, and to increase the ML utility cost of attackers by aggravating degradation in model performance when removing the watermark.

\section{Algorithms}\label{sec:algorithm}
In this section, we present the watermark embedding and detection algorithms in detail.
% which depend on certain selected attributes (age, salary, quantities, etc) determined by the data owner.
The process of \ourmethod is illustrated in Figure~\ref{figure:flowchart} and involves two stages: watermark embedding and watermark detection. Table~\ref{tab:notation} summarizes the frequently used notations.

% We use $t.A_{i}$ to denote the value of attributes $A_{i}$ in tuple $t \in D$. 

\begin{table}[t]
\centering
\caption{The summary of frequently used notations.}
\label{tab:notation}
\small %
\begin{tabular}{|c|p{6cm}|}
\hline
Notation & Description \\
\hline
% $v$ & Number of attributes \\
$n$ &  Number of cells from marking attribute \\
$n_w$ & Number of key cells from marking attribute \\
$n_g$ & Number of green key cells \\
$p$ & Positive axis boundary of the perturbation range \\
$k$ & Number of unit domains \\
$\alpha$ & Significance level of the test for detecting a watermark \\
$\beta$ & Proportion affected by a certain attack relative to the overall dataset \\
$\gamma$ & Ratio of the lengths of green domains to red domains \\
$D_o$ & Original dataset \\
$D_w$ & Watermarked dataset \\
$D_s$ & Suspicious dataset \\
\hline
\end{tabular}%
% \vspace{-0.3cm}
\end{table}

% \todo{uniform symbols such as n and nw}

\begin{figure}[h]
    % \blueno{Figure 2 example. use example env, other example the same}
    \centering
    \includegraphics[width=\linewidth]{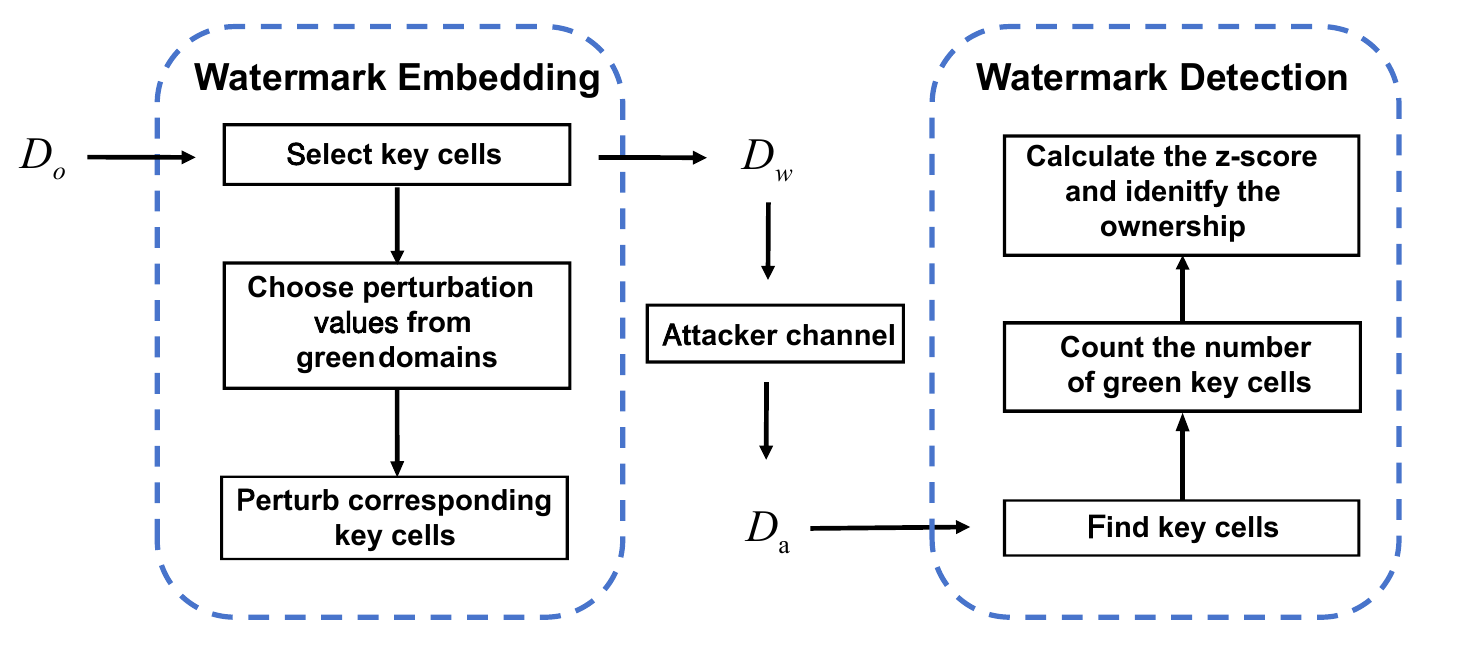}
    \caption{ Flowchart of \ourmethod, where $D_o$ is the original dataset, $D_w$ is the
watermarked dataset, $D_a$ is the watermarked dataset after suffered attacks.}
    \label{figure:flowchart}
    % \vspace{-0.5cm}
\end{figure}

% \begin{figure}[t]
%     \centering
%     \includegraphics[width=\linewidth]{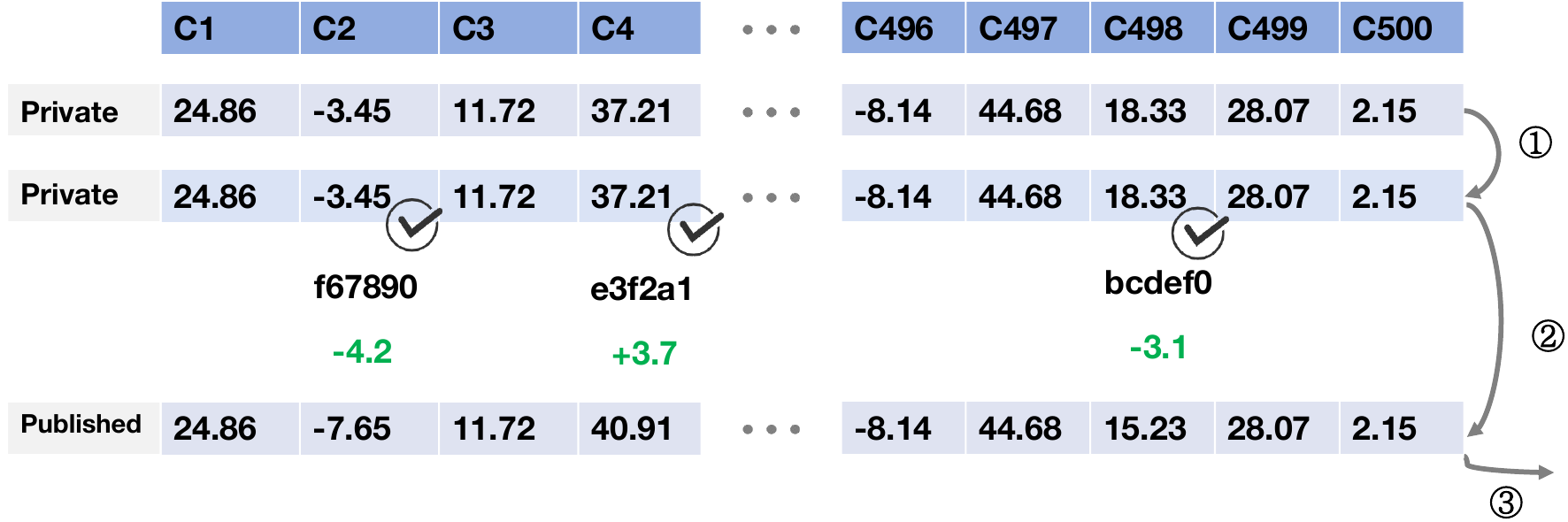}
%     \caption{An example for watermark embedding.}
%     \label{fig:figure2}
% \end{figure}

\subsection{Threat Model}

To delimit the scope of this paper, we describe the abilities of attackers as follows.

% Attackers' objective is to erase the watermark through attacks. 
Considering the characteristics of \ourmethod, we assume that the attackers 1) have full access to the watermarked dataset; 2) are aware of the general process of the watermark embedding and detection algorithms but do not know secret information, such as the distribution of key cells or the specific attributes used for choosing key cells; 3) cannot access the original dataset; and 4) take into consideration not to excessively impact ML utility of the watermarked dataset. For example, in alteration attacks, the range of alteration to the cells would not exceed the perturbation range of the watermark embedding algorithm by a large margin.
% if the attackers perturb some data points, those perturbed data points will not differ significantly from the corresponding points in the original dataset, as the attackers would not want to impact the ML utility too much.

\subsection{Watermark Embedding}
% \blueno{To delimit the scope of this paper, we first describe the threat model in this section and then present the details of the watermark generation algorithm.}
% In this subsection, we provide a detailed introduction to the threat model and 

In this section, we present the details of the watermark embedding algorithm.

The watermark embedding approach draws inspiration from the random deviations that exist between specific cells within a suspicious dataset and those within the original dataset. This deviation may arise from biases in the measuring instruments used during data collection, or it could stem from differences in the data processing procedures. For example, several institutions independently survey the housing prices in different areas of a city. Due to factors such as the scope of the survey, their data will not be exactly the same and will have certain deviations relative to each other.

% \begin{figure}[h]
%     \centering
%     \includegraphics[width=\linewidth]{}
%     \caption{An example of deviation distribution.}
%     \label{figure:deviation_distribution}
%     \vspace{-0.5cm}
% \end{figure}

If the deviation range is randomly partitioned into two divisions, the probability that the deviation of one cell of suspicious datasets relative to the original dataset falls into one of these divisions is 0.5. If the deviations of multiple cells simultaneously fall into one division, the probability will exponentially decrease. Conversely, the characteristics of the deviation formed by artificially adding perturbation from a particular division can serve as a watermarking scheme. 
% The general idea has already been applied in the watermarking scheme for generated texts from large language models~\cite{llm_watermark}.

% For most cells in a tuple of the suspicious dataset, their deviation from the corresponding cells in a tuple of the original dataset can theoretically be infinite. However, it is reasonable to assume that the majority of the deviation will fall within a range randomly, and the number of cells with outside the range can be considered negligible. 
It is reasonable to assume that the deviation will not be excessively large. We approximate the deviation range using a custom range \([-p,p]\) and utilize it as the perturbation range. The perturbation range will be partitioned into two divisions, and values will be selected from one division as noise to perturb key cells as a watermark. Figure~\ref{figure:partition} shows an example of a domain partition. We divide the range \([-p, p]\) into \(k\) unit domains which are further divided into green domains and red domains in a 1:1 ratio. As illustrated in Figure~\ref{figure:partition}, the range \([-p, p]\) is partitioned into various red and green domains of different sizes but with an overall equal length. Based on the probability that the deviation of all the detected key cells that fall into the specific type will decrease exponentially, the watermark embedding algorithm only needs to perturb a small number of cells. The characteristics of the deviation distribution can be tested with one proportion z-test, which will be explained in detail in Section~\ref{sec:detection_alg}.

\definecolor{softcoral}{rgb}{0.94, 0.5, 0.5}
\definecolor{lightgreen}{rgb}{0.56, 0.93, 0.56}

\begin{figure}[ht]
    \centering
    \begin{tikzpicture}[scale=0.7]
    % 画数轴
    \draw[->, thick] (-6,0) -- (6,0) node[right] {};

    % 画红绿相间的区域
    \fill[softcoral] (-5,0.25) rectangle (-4.5,-0.25);
    \fill[lightgreen] (-4.5,0.25) rectangle (-2.5,-0.25);
    \fill[softcoral] (-2.5,0.25) rectangle (-1.5,-0.25);
    \fill[lightgreen] (-1.5,0.25) rectangle (-0.5,-0.25);
    \fill[softcoral] (-0.5,0.25) rectangle (0,-0.25);
    \fill[lightgreen] (0,0.25) rectangle (0.5,-0.25);
    \fill[softcoral] (0.5,0.25) rectangle (2.5,-0.25);
    \fill[lightgreen] (2.5,0.25) rectangle (3.5,-0.25);
    \fill[softcoral] (3.5,0.25) rectangle (4,-0.25);
    \fill[lightgreen] (4,0.25) rectangle (5,-0.25);
    
    % 标记区间的值
    % \foreach \x/\label in {-5/-x_2,-0.5/-x_1,0/0, 0.5/x_1,5/x_2}
    \foreach \x/\label in {-5/-p,0/0,5/p}
    \draw (\x, 0.35) -- (\x, -0.35) node[below] {$\label$};

    % 图例
    \begin{scope}[yshift=1cm, xshift=-0.2cm]
        \fill[softcoral] (-3.5,0) rectangle (-3,0.5);
        \node[right] at (-3,0.25) {Red domain};
        \fill[lightgreen] (1,0) rectangle (1.5,0.5);
        \node[right] at (1.5,0.25) {Green domain};
        % \fill[gray] (2,0) rectangle (2.5,0.5);
        % \node[right] at (2.5,0.25) {Ownership range};
    \end{scope}
\end{tikzpicture}
    \caption{An example of domain partition.}
    \label{figure:partition}
    % \vspace{-0.5cm}
\end{figure}
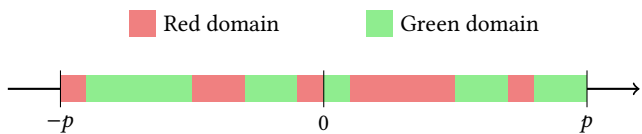

% \begin{example}
% Figure~\ref{figure:partition} shows an example of a domain partition. We divide the range \([-p, p]\) into \(k\) unit domains, which are further divided into green domains and red domains in a 1:1 ratio. As illustrated in Figure~\ref{figure:partition}, the range \([-p, p]\) is partitioned into various red and green domains of different sizes but with an overall equal length.
% \label{example:partition}
% \end{example}

For brevity, this paper focuses on the widely used tasks of classification and regression within the realm of ML. Tabular datasets employed by these tasks chiefly comprise two types of attributes: numerical attributes and categorical attributes, which can be used by \ourmethod to embed watermarks. Besides, we do not consider the character traits of categorical attributes, but only the encoded categorical attributes such as 0-6. Hence, the categorical attributes can be regarded as special numerical attributes.

% \todo{improve figure 4 and figure 5}
\begin{figure}[h]
    % \blueno{Figure 2 example. use example env, other example the same}
    \vspace{0.5cm}
    \centering
    \includegraphics[width=\linewidth]{fig/watermark_generation_example.pdf}
    \caption{An example of watermark embedding.}
    \label{figure:embedding}
    % \vspace{-0.5cm}
\end{figure}

A tabular dataset contains attributes $A_{0}, A_{1}, \ldots, A_{v-1}$ and we assume ML models using $A_{v-1}$ as the prediction target. For simplicity, we select key cells from one attribute. In Section ~\ref{sec:experiments}, we verify the effectiveness of \ourmethod by embedding the watermark into the attribute $A_{v-1}$. The algorithm can easily extend when the watermarking scheme selects key cells from multiple attributes. Algorithm~\ref{alg:watermark_generation_numerical} gives detailed information on the watermark embedding algorithm when we select key cells from numerical attributes. We first choose an attribute $A_i$ to which the watermark will be embedded (Line 1). Then, the number of key cells n and the boundary of the perturbation range p are determined based on the strength of the watermark (Line 2). On the original dataset $D_o$, we pick n cells as key cells from the chosen attribute $A_i$ (Line 3). For each key cell, we divide the perturbation range $[-p,p]$ into k domains equally, then randomly divided into \(0.5k\) green domains and \(0.5k\) red domains, and the random number seed is kept as a secret. In practice, we tend to choose a relatively large value for \(k\), such that within a small part of \([-p, p]\), the overall lengths of green domains and red domains approach equality. Perturb the key cell by randomly choosing a number from the green domains (Lines 4-9). After perturbing all key cells, we finish the watermark embedding process. The algorithm can be converted to categorical attributes with simple modifications. As the value range of encoded categorical attributes is discrete and relatively small, it does not necessitate a predefined perturbation range. We choose to randomly divide the possible categories into two domains, green and red, and then select a category from the green domain to replace the original category attribute.

\begin{example}
Figure~\ref{figure:embedding} shows an example of the watermark embedding algorithm for numerical attributes. Suppose the attribute we embed a watermark has 500 cells. \ding{172} We select 50 key cells from these cells, and 3 key cells out of them are indicated by checkmarks. We set the perturbation range to \([-5, 5]\). \ding{173} For each key cell, we use a random number to seed a random number generator. The random seeds corresponding to each key cell are represented below in hexadecimal format. Utilizing these seeds, we partition the range \([-5, 5]\) as illustrated in Figure~\ref{figure:partition}. Since each key cell has a unique seed, the resulting green and red domains are also different for each cell. Then, for each key cell, we select a random number as data noise from its corresponding green domains (represented by the green numbers under the random seeds) to perturb it. \ding{174} After perturbing all key cells in this manner, we obtain the watermarked dataset, which is then ready for publishing. 
\label{example:embedding}
\end{example}

% \begin{algorithm}[t] \caption{Monte Carlo Shapley value computation.}
% \SetKwInOut{Input}{input}\SetKwInOut{Output}{output}

% \Input{dataset $\mathcal{D}=\{\bm{z}_1, \ldots, \bm{z}_n\}$ and sample size $\tau > 0$}
% \Output{Shapley value $\mathcal{SV}_i$ for each data point $\bm{z}_i$ $(1 \leq i \leq n)$}
%     $\mathcal{SV}_i \gets 0$ $(1 \leq i \leq n)$\;
% \end{algorithm}

\begin{algorithm}[ht] \caption{Watermark Embedding Algorithm.}
\label{alg:watermark_generation_numerical}
\SetKwInOut{Input}{input}\SetKwInOut{Output}{output}

\Input{original dataset $D_o$, selected attribute $A_i$}
\Output{watermarked dataset $D_w$}
    choose an attribute $A_i$ to which the watermark is to be added\;
    determine the values of p and n\;
    select $n_w$ cells as key cells from the selected attribute\;
    \ForEach{key cell}{
    use a random number to seed a random number generator\;
    partition the range $[-p, p]$ into $k$ unit domains\;
    with this seed, randomly divide these domains into 0.5k green domains and 0.5k red domains\;
    from the green domains, select a random number\;
    perturb the key cell with this random number\;
    }

\end{algorithm}

% \begin{algorithm}[ht] \caption{Watermark Embedding Algorithm (categorical).}
% \label{alg:watermark_generation_categorical}
% \SetKwInOut{Input}{input}\SetKwInOut{Output}{output}

% \Input{original dataset $D_o$, selected attribute $A_i$}
% \Output{watermarked dataset $D_w$}
%     Choose an attribute $A_i$ to which the watermark is to be added\;
%     Determine the value of n\;
%     Select $n$ cells as key cells from the selected attribute\;
%     \ForEach{key cell}{
%     Use a random number to seed a random number generator\;
%     Divide the set of categories evenly into the green domain and the red domain.  \;
%     From the green domain, select a random category\;
%     Replace the key cell with this random category\;
%     }

% \end{algorithm}

\subsection{Watermark Detection} \label{sec:detection_alg}
In this section, we provide the details of the watermark detection algorithm. Specifically, the watermark detection algorithm employs a rigorous statistical measure to examine suspicious datasets, ensuring successful detection even in the face of substantial attacks. Next, we introduce this measure in detail.

\subsubsection{One Proportion Z-test} The one proportion z-test~\cite{fleiss2013statistical} is the statistical tool utilized in the watermark detection algorithm. One proportion z-test is a statistical test that determines whether a single sample rate (e.g., the success rate) is significantly different from a known or hypothesized population rate. We define the null hypothesis $H_0$ to detect the watermark as follows, 

% \begin{center}
% \fbox{
% \begin{tabular}{rp{5.5cm}}
% \small{HYPOTHESIS $H_0$}: & The tabular dataset is not generated by attacking the watermarked dataset.
% \end{tabular}}
% \end{center}

\begin{tcolorbox}[
colback=white,
colframe=black,
boxsep=-1mm,
%sharp corners,
boxrule=1pt]
    \begin{equation*}
        \begin{split}
        H_0:\quad & \text{ The tabular dataset is not generated} \\
             & \text{ by attacking the watermarked dataset.}
        \end{split}
    \end{equation*}
\end{tcolorbox}

Then we will test a statistic for a one-proportion z-test as 
\[
z = \frac{2 \times( n_g - 0.5n_w)}{\sqrt{n_w}},
\]
where \( n_w \) is the total number of key cells and \( n_g \) is the number of green cells counted in the detection phase. The meaning of this equation is to subtract the expected value from the observed value, and then divide by the standard deviation of the observed value. In the watermark detection algorithm, $n_g$ is the main observed value. Given that in the watermark embedding algorithm, the length ratio of green domains to red domains is one-to-one, the probability of the deviations of the detected key cells relative to the key cells in the original dataset falling within the green domains can be considered as 0.5. Therefore, the expectation of $n_g$ is \(0.5n_w\) and its standard deviation is \(\sqrt{0.25n_w}\) .

 A significance level \( \alpha \) needs to be defined according to the strength of the scheme. Commonly, \( \alpha = 0.05 \) is used, which implies a 5\% risk of incorrectly rejecting the null hypothesis ($H_0$). Subsequently, we need to calculate the p-value. The p-value is the probability of observing a test statistic as extreme as, or more extreme than, the value observed, under the assumption that the null hypothesis is true. If the p-value is less than or equal to \( \alpha \), we reject the null hypothesis. This means that there is statistical evidence to suggest that the sample proportion is significantly different from the expected proportion. Conversely, if the p-value is greater, we fail to reject the null hypothesis, suggesting that any observed difference could reasonably occur by random chance, implying that no watermark is detected in the suspicious dataset $D_s$. The \( p \)-value can directly measure the magnitude of the false positive rate because a larger \( p \)-value indicates a higher probability of observing the sample under the null hypothesis condition, which means a higher false positive rate when we reject the null hypothesis.

\subsubsection{Detection} In the watermark detection algorithm, \( \alpha \) will be set directly to the value related to the z-score. As z-score and p-value are negatively correlated, the larger the z-score, the smaller the p-value. Users can customize the threshold to adjust the strength of the watermark detection algorithm. If the threshold of the p-value is set to 0.05, the corresponding z-score threshold is 1.96. In Section~\ref{sec:experiments}, we default the z-score threshold to 1.96.

Algorithm~\ref{alg:watermark_detection_numerical} details the steps of the watermark detection algorithm for numerical attributes. For each key cell of $D_s$, the partition of green domains and red domains is recovered using the corresponding random seed (Lines 3-4). Calculate the difference of this key cell between $D_s$ and $D_o$, and determine whether the difference belongs to the divided green domains. If so, we mark this key cell as a green cell (Lines 5-7). The process concludes with the calculation of the z-score, which is compared to a predetermined threshold \(\alpha\). If the computed z-score surpasses \(\alpha\), it indicates the presence of the watermark in the dataset \(D_s\). In contrast, if the z-score falls below \(\alpha\), it suggests that the watermark is not detected in \(D_s\) (Lines 8-12). The algorithm can be converted to categorical attributes by directly determining whether a key cell in \( D_s \) belongs to the green domain to count the green cells.
% primarily divided into two parts: matching and calculating the z-score. 

\begin{figure}[t] 
\centering 
\includegraphics[width=\linewidth]{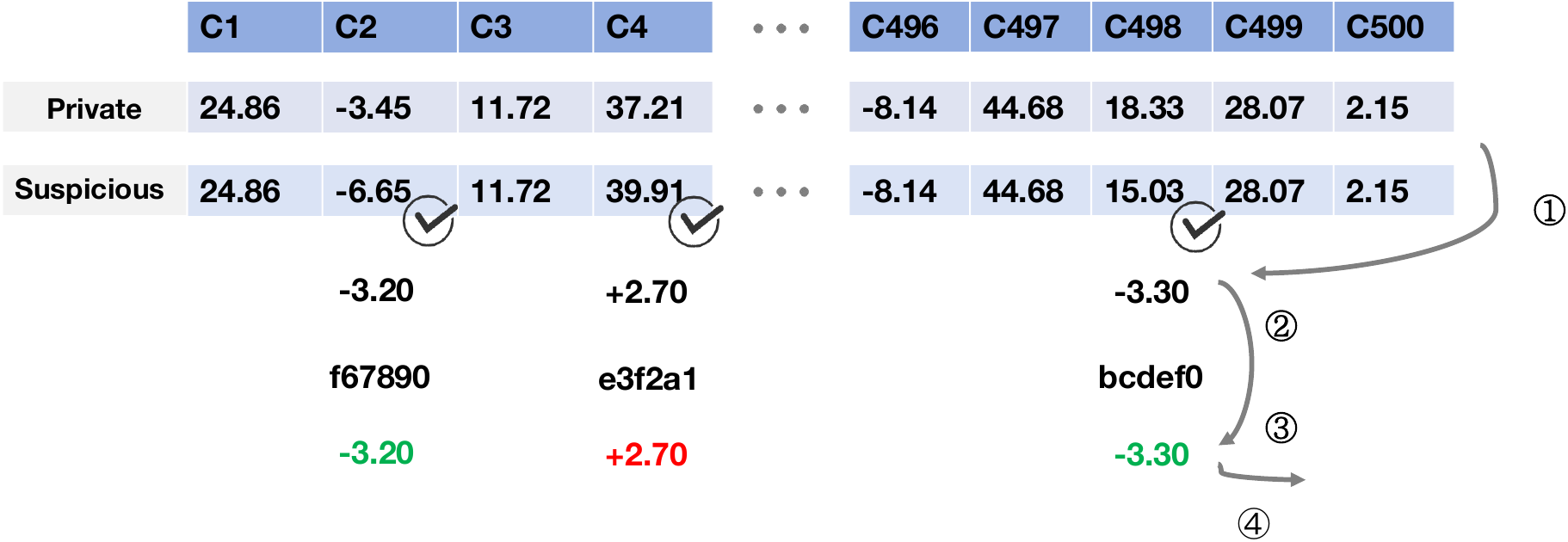} \caption{An example for watermark detection.} 
\label{fig:watermark_detection_example}
% \vspace{-0.5cm}
\end{figure}

% \todo{Serial numbers vary in weight; The calculation formula is given as example}

\begin{example}\label{example:detection}
An example of the watermark detection algorithm is shown in Figure~\ref{fig:watermark_detection_example}. \ding{172} First, we calculate the difference of 
all 50 key cells between the private data and the suspicious data. \ding{173} Next, we use the random seeds previously employed in watermark embedding to restore the green domains and red domains corresponding to each key cell. \ding{174} If the difference of a key cell in the suspicious data falls within its corresponding green domains, the key cell is marked as a green cell. Otherwise, it is marked as a red cell. We find that 30 out of the total 50 key cells are green, this indicates that 40\% of the green cells have been flipped by the attacker into red cells. \ding{175} Given that the total number of key cells is 50, the corresponding z-score is calculated to be 1.414. By determining whether the z-score is greater than $\alpha$, we can determine whether the suspicious data is watermarked.
    % % \blueno{Figure 2 example. use example env, other example the same}
    %  An example of watermark detection.
    % \centering
    % \includegraphics[width=\linewidth]{fig/watermark_generation_example.pdf}
    % \label{example:detection}
\end{example}

\begin{algorithm}[t] \caption{Watermark Detection Algorithm.}
\label{alg:watermark_detection_numerical}
\SetKwInOut{Input}{input}\SetKwInOut{Output}{output}

\Input{original dataset $D_o$, suspicious dataset $D_s$, selected attribute $A_i$, perturbation range $[-p, p]$, number of key cells $n_w$}
\Output{\rm{true} or \rm{false}}
$n_g \gets$ 0\;
\ForEach{key cell}{
use the corresponding random number to seed a random number generator\;
restore the green domains and red domains of $[-p, p]$\;
calculate the difference between a key cell of $D_s$ and that of $D_o$\;
 \If{the difference $\in$ green domains}{
        $n_g \gets n_g + 1$\;
    }
}
$z$ $\gets$ $\frac{2( n_g - 0.5n_w)}{\sqrt{n_w}}$\;
\eIf{z $\ge$ $\alpha$}{
        \Return{\rm{true}};
    }{
        \Return{\rm{false}};
    }    
\end{algorithm}

% \begin{algorithm}[t] \caption{Watermark Detection Algorithm (categorical).}
% \label{alg:watermark_detection_categorical}
% \SetKwInOut{Input}{input}\SetKwInOut{Output}{output}

% \Input{original dataset $D_o$, suspicious dataset $D_s$, selected attribute $A_i$, number of key cells $n$}
% \Output{\rm{true} or \rm{false}}
% $n_g \gets$ 0\;
% \ForEach{key cell}{
% Use the corresponding random number to seed a random number generator\;
% Restore the green domain and red domain of the set of categories\;
%  \If{the key cell $\in$ the green domain}{
%         $n_g \gets n_g + 1$\;
%     }
% }
% $z$ $\gets$ $\frac{2( n_g - 0.5n)}{\sqrt{n}}$\;
% \eIf{z $\ge$ $\alpha$}{
%         \Return{\rm{true}};
%     }{
%         \Return{\rm{false}};
%     }    
% \end{algorithm}

\subsubsection{Matching} In many scenarios, the relative positions of the tuples requiring examination in the detected dataset are inconsistent with those in the watermarked dataset. This inconsistency can be caused by attacks, such as insertion or deletion attacks, which may change the relative positions of tuples. Therefore, it is necessary to locate these perturbed tuples. Considering that watermark schemes reliant on primary keys are susceptible to primary key replacement attacks, we need a matching algorithm to address this issue. Inspired by the basic fact that for a tabular dataset, the probability that the MSBs (most significant bits) of multiple attributes are simultaneously equal at the same time is low, and it's difficult for attackers to substitute them all, we choose MSBs of several attributes to form a special primary key and check whether the primary key of a tuple in $D_s$ is equal to the primary key of a tuple containing a key cell in $D_o$. 

Algorithm~\ref{alg:matching} details the steps to locate tuples with key cells in $D_s$. We first select MSBs of k attributes as a special primary key (Line 1). For each tuple containing key cells in $D_o$, we use the corresponding primary key to compare with the primary key of each tuple in $D_d$. We find the corresponding key cell when the tuple in $D_d$ is matched (Lines 3-8). Our practice shows that the effect of the watermark detection algorithm is already effective when two attributes are chosen, and the selected attributes are almost equivalent to a true primary key when we choose three attributes. 

% In practical applications, we can not require two primary keys to be completely equal, only each attribute can be regarded as equal within a very small range, to avoid the interference of small errors caused in the process of dataset processing.
\begin{algorithm}[t] \caption{Matching Algorithm.}
\label{alg:matching}
\SetKwInOut{Input}{input}\SetKwInOut{Output}{output}

\Input{original dataset $D_o$, detected dataset $D_d$}
\Output{key cells in $D_o$}
select MSBs of k attributes as a primary key\; 
$k_ds \gets$  \( \emptyset \) \;
\ForEach{tuple containing key cell in $D_d$}{
\ForEach{tuple in $D_o$}{
    \If{primary key of $t_o$ = primary key of $t_d$}{
        $k_d \gets$ key cell in $t_d$\;
        \( k_ds \gets k_ds \cup \{k_d\} \)\;
    }
}
\Return{$k_ds$};
}

\end{algorithm}

% \begin{algorithm}[t] \caption{Watermark Detection Algorithm with Matching.}
% \label{alg:watermark_detection_matching}
% \SetKwInOut{Input}{input}\SetKwInOut{Output}{output}

% \Input{original dataset $D_o$, suspicious dataset $D_s$, selected attribute $A_i$, perturbation range $[-p,p]$} 
% \Output{True or false}
% Run Algorithm~\ref{alg:matching} to find corresponding key cells in $D_s$\;
% $n_g$ = 0\;
% \ForEach{key cell}{
% Use the corresponding random number to seed a random number generator\;
% Restore the partition of green domains and red domains\;
% Calculate the difference between a key cell of $D_s$ and that of $D_o$\;
%  \If{the difference $\in$ green domains}{
%         $n_g = n_g + 1$\;
%     }
% }
% z = $\frac{2 \times( n_g - \frac{n}{2})}{\sqrt{n}}$\;
% \eIf{z $\ge$ $\alpha$}{
%         \Return{true};
%     }{
%         \Return{false};
%     }    
% \end{algorithm}

To combine Algorithm~\ref{alg:watermark_detection_numerical} with Algorithm~\ref{alg:matching}, we simply run Algorithm~\ref{alg:matching} at the beginning of Algorithm~\ref{alg:watermark_detection_numerical} to match tuples, which is designed to identify the perturbed tuples containing key cells, especially in scenarios where the order of tuples is disrupted due to some attacks such as insertion or deletion attacks.

\begin{figure}[t]
    \centering
    \includegraphics[width=\linewidth]{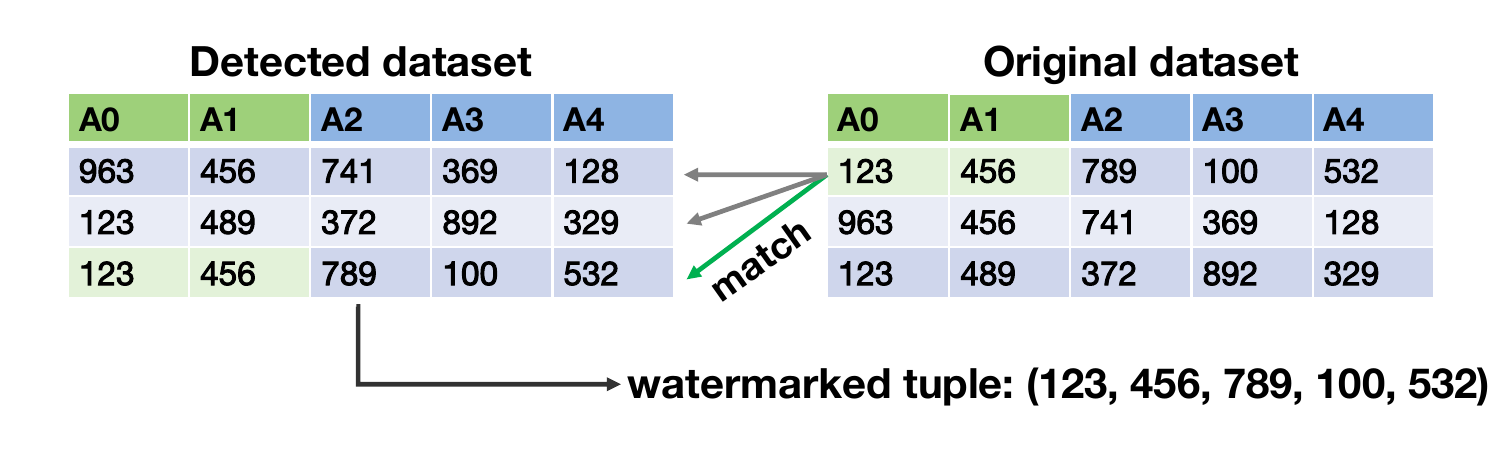}
    \caption{An example of matching tuples.}
    \label{figure:matching}
    % \vspace{-0.5cm}
\end{figure}

\begin{example}
Figure~\ref{figure:matching} shows an example of using Algorithm~\ref{alg:matching} to find perturbed tuples. The left side of the figure shows the detected dataset and the right side is the original dataset. It can be observed that the order of the detected dataset is scrambled compared to the original dataset, making it difficult to find perturbed tuples directly. We assume that the first tuple of the original dataset is the watermarked tuple containing a key cell. We then select MSBs of the first two attributes, $A_0$ and $A_1$, as a special primary key. Using this special primary key, we compare tuples in the detected dataset. When we compare to the third tuple of the detected dataset, both values of MSBs of $A_0$ and $A_1$ equal the values in the special primary key of the tuple. Finally, we successfully identify the perturbed tuple in the detected dataset.
\label{exampe:matching}
\end{example}
\subsection{Analysis on Watermark Removal}
In this section, to demonstrate the robustness of \ourmethod, we model the potential strategies of alteration attacks and mathematically prove the lower bound of the number of cells that need to be altered by the attacker to achieve a high confidence level. 
% Without affecting the functionality of the watermark, we assume that each red domain and green domain alternate, each with a length of 2p/k. 
% We assume that each red domain and green domain alternate, each with a length of 2p/k. 
According to our partition strategy, when a green cell is perturbed out of the range of the unit green domain, there is a $0.5$ chance that it will be flipped to a red cell.
According to the threat model, the attackers cannot discern the key cells. Therefore, to flip green key cells into red, attackers can only randomly select \(n_h\) cells and introduce noise. We simulate alteration attacks using three representative methods of adding noise, including uniform distribution noise, Gaussian distribution noise, and Laplace distribution noise, all with a mean of 0 and a standard deviation of \(\sigma\).

% \todo{why choose these three distributions}
\begin{proposition}\label{prop:erasure}
    Denote by $n_h$ the amount of cells an attacker tamper with. When each cell is added an \textnormal{i.i.d} noise from uniform distribution $\epsilon \sim U[-2\sigma,2\sigma]$. The expectation of $n_h$ for the watermark removing is $$\mathrm{E}[n_h]=\frac{n(n_w-n_\alpha)}{n_wp_\sigma},$$
    where $n_\alpha = \alpha \sqrt{\frac{n_w}{4}}+\frac{n_w}{2}$ is the least amount of green cells to make the z-score achieve $\alpha$ and $p_\sigma=0.5-\frac{p}{4k\sigma}$. For example, when $n=10000,n_w=400$ and $n_\alpha = 250$, we have $\mathrm{E}[n_h] > 7500$ since $p_\sigma < 0.5$. Moreover, to achieve a $95\%$ confidence interval, we have
    $$
    n_h > n_w + (\frac{(n-n_w)!}{0.05n!}\sum_{n_x=0}^{n_w-n_\alpha}\sum_{n_y=n_x}^{n_w}\binom{n_y}{n_x}\binom{n_w}{n_y}p_\sigma^{n_x}(1-p_\sigma)^{n_y-n_x})^{\frac {-1}{n_w}}.
    $$
\end{proposition}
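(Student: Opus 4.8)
The plan is to convert the detection rule into a purely combinatorial condition on how many originally-green key cells the attacker must flip, to compute the flip probability that the uniform noise induces on a single cell, and then to control the random flip count both in expectation and in its lower tail.

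\textbf{Reducing detection to a flip count.} First I would read the required number of surviving green cells directly off the statistic $z=2(n_g-0.5n_w)/\sqrt{n_w}$: solving $z\ge\alpha$ gives $n_g\ge\alpha\sqrt{n_w/4}+n_w/2=n_\alpha$. Since every key cell starts green in $D_w$, the watermark survives exactly when the number of flipped key cells is at most $n_w-n_\alpha$, and is removed otherwise. This turns the whole problem into understanding the distribution of the number of green-to-red flips produced by tampering with $n_h$ cells.

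\textbf{Single-cell flip probability.} Next I would derive $p_\sigma$. Each unit domain has width $2p/k$ and the noise is uniform on $[-2\sigma,2\sigma]$ with density $1/(4\sigma)$, so the probability that a perturbed cell remains inside its current unit green domain is the width ratio $(2p/k)/(4\sigma)=p/(2k\sigma)$; hence it leaves with probability $1-p/(2k\sigma)$, and by the $1{:}1$ green/red split it then lands red with probability $1/2$. Multiplying gives $p_\sigma=\tfrac12\bigl(1-\tfrac{p}{2k\sigma}\bigr)=0.5-\tfrac{p}{4k\sigma}$.

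\textbf{Expectation.} Because the key cells are hidden, the $n_h$ tampered cells form a uniform random subset, so the count of key cells among them is hypergeometric with mean $n_h n_w/n$, and each is flipped independently with probability $p_\sigma$. Linearity of expectation then gives expected flips $=n_h(n_w/n)p_\sigma$, and equating this to the removal threshold $n_w-n_\alpha$ and solving yields $\mathrm{E}[n_h]=n(n_w-n_\alpha)/(n_w p_\sigma)$; since $p_\sigma<0.5$ the stated numerics follow.

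\textbf{High-probability bound and the main obstacle.} For the $95\%$ statement I would write the exact survival probability, letting $n_y$ be the (hypergeometric) number of key cells hit and the flips be $\mathrm{Binomial}(n_y,p_\sigma)$ given $n_y$:
\[
P_{\mathrm{fail}}=\sum_{n_x=0}^{n_w-n_\alpha}\sum_{n_y=n_x}^{n_w}\frac{\binom{n_w}{n_y}\binom{n-n_w}{n_h-n_y}}{\binom{n}{n_h}}\binom{n_y}{n_x}p_\sigma^{n_x}(1-p_\sigma)^{n_y-n_x}.
\]
The goal is to force $P_{\mathrm{fail}}\le 0.05$ and solve for $n_h$. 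I would remove the $n_h$-dependence from the double sum by bounding the selection factor, using monotonicity of $\binom{n-n_w}{\cdot}$ to replace $\binom{n-n_w}{n_h-n_y}$ by $\binom{n-n_w}{n_h}$ and then
\[
\frac{\binom{n-n_w}{n_h}}{\binom{n}{n_h}}=\frac{(n-n_w)!\,(n-n_h)!}{n!\,(n-n_w-n_h)!}\le\frac{(n-n_w)!}{n!}(n-n_h)^{n_w},
\]
which leaves precisely the sum $S$ of the statement times $\frac{(n-n_w)!}{n!}(n-n_h)^{n_w}$; imposing $\le 0.05$ and taking an $n_w$-th root isolates $n_h$. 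The main obstacle is exactly this combinatorial bounding: I must justify the monotonicity replacement (valid while $n_h\le (n-n_w)/2$) and the crude product bound, and carefully track the factorial bookkeeping and the direction of the $n_w$-th root — my own derivation naturally produces a bound of the form $n_h\ge n-\bigl(\tfrac{(n-n_w)!}{0.05\,n!}S\bigr)^{-1/n_w}$, so matching the printed leading term (where $n$ versus $n_w$ and the sign must be reconciled) is where I would spend the most care.
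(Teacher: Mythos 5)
Your reduction to the flip count, your computation of $p_\sigma$, and your expectation argument coincide with the paper's proof: the paper likewise fixes $\gamma=0.5$, assumes $\sigma\ge p/k$, computes $\Pr(\epsilon>2p/k-x)+\Pr(\epsilon<-x)=1-\frac{p}{2k\sigma}$ for a cell in its unit domain of width $2p/k$, halves it to get $p_\sigma$, and equates the expected number of flips $n_h\frac{n_w}{n}p_\sigma$ to $n_w-n_\alpha$. The genuine gap is in the $95\%$ part, exactly where you said you were unsure. The proposition is a \emph{necessary} condition on $n_h$ (the attacker must tamper with at least that many cells), so you must \emph{lower}-bound the survival probability $P_{\mathrm{fail}}=\Pr(N_x\le n_w-n_\alpha)$. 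Your replacement $\binom{n-n_w}{n_h-n_y}\le\binom{n-n_w}{n_h}$ produces an \emph{upper} bound on $P_{\mathrm{fail}}$, which can only yield a \emph{sufficient} condition for the attack to succeed, i.e.\ an upper bound on the critical $n_h$ --- a statement in the wrong logical direction --- and it also lands on the non-matching expression $n-(\cdot)^{-1/n_w}$ instead of $n_w+(\cdot)^{-1/n_w}$.

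The paper bounds the other way: since $n_y\le n_w$ and $n_h-n_y$ stays in the increasing range of $\binom{n-n_w}{\cdot}$ for $n_h$ not too large, it uses $\binom{n-n_w}{n_h-n_y}\ge\binom{n-n_w}{n_h-n_w}$, and then
\[
\frac{\binom{n-n_w}{n_h-n_w}}{\binom{n}{n_h}}=\frac{(n-n_w)!}{n!}\cdot\frac{n_h!}{(n_h-n_w)!}>\frac{(n-n_w)!}{n!}\,(n_h-n_w)^{n_w},
\]
giving $P_{\mathrm{fail}}>(n_h-n_w)^{n_w}\frac{(n-n_w)!}{n!}S$, where $S$ is the double sum in the statement. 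Evaluating this lower bound at $n_h=T_0:=n_w+\bigl(\frac{(n-n_w)!}{0.05\,n!}S\bigr)^{-1/n_w}$ gives exactly $0.05$, so $P_{\mathrm{fail}}(T_0)>0.05$; since $N_x$ is stochastically increasing in $n_h$, $P_{\mathrm{fail}}$ is nonincreasing and therefore $P_{\mathrm{fail}}(n_h)>0.05$ for every $n_h\le T_0$, i.e.\ the attacker needs $n_h>T_0$, which is the stated inequality. (This monotonicity step is left implicit in the paper --- read literally, ``$P_{\mathrm{fail}}$ exceeds the bound and must be at most $0.05$'' would give the reversed inequality $n_h<T_0$ --- but it is what makes the printed direction come out right, and your write-up would need it as well.)
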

\begin{proof}
    We state the case where $\gamma = 0.5$ and $\sigma \geq p/k$. Consider a green cell $x\sim U[0,2p/k]$, the probability $x+\epsilon$ is not in the same interval as $x$ is $Pr(x+\epsilon>2p/k)+Pr(x+\epsilon<0)$, which is the $Pr(\epsilon>2p/k-x)+Pr(\epsilon<-x)=1-\frac{p}{2k\sigma}$. Therefore, the probability that $x+\epsilon$ is red is $p_\sigma = 0.5-\frac{p}{4k\sigma}$.

    To achieve the watermark removing, we need to add noise on at least $n_w-n_\alpha$ cells with the watermark. Therefore, the expectation of $n_h$ is $\frac{n(n_w-n_\alpha)}{n_w}(0.5-\frac{p}{4k\sigma})^{-1}$.

    Consider the case of a $95\%$ confidence interval. Denote by $N_x$ the amount of key cells changing from green to red. Then, $Pr(N_x=n_x)=\sum_{n_y=n_x}^{n_w}\frac{\binom{n_y}{n_x}\binom{n_w}{n_y}\binom{n-n_w}{n_h-n_y}}{\binom{n}{n_h}}p_\sigma^{n_x}(1-p_\sigma)^{n_y-n_x}>\frac{\binom{n-n_w}{n_h-n_w}}{\binom{n}{n_h}}$ $\sum_{n_y=n_x}^{n_w}$ $\binom{n_y}{n_x}\binom{n_w}{n_y}$ $p_\sigma^{n_x}(1-p_\sigma)^{n_y-n_x}$ when we consider $n_h$ is not very large, which means $\sum_{n_x=0}^{n_w-n_\alpha}Pr(N_x=n_x)>(n_h-n_w)^{n_w}$ $\frac{(n-n_w)!}{n!}\sum_{n_x=0}^{n_w-n_\alpha}$ $\sum_{n_y=n_x}^{n_w}\binom{n_y}{n_x}\binom{n_w}{n_y}$ $p_\sigma^{n_x}(1-p_\sigma)^{n_y-n_x}$. Then, we can derive the aforementioned inequality.
\end{proof}

\begin{proposition}
    Denote by $n_h$ the amount of cells an attacker tamper with. When each cell is added an \textnormal{i.i.d} noise from Gaussian distribution $\epsilon \sim \mathcal{N}(0,\sigma^2)$. The expectation of $n_h$ for the watermark removing is $$\mathrm{E}[n_h]\geq\frac{n(n_w-n_\alpha)}{n_wp_\sigma},$$
     where $n_\alpha = \alpha \sqrt{\frac{n_w}{4}}+\frac{n_w}{2}$ is the least amount of green cells to make the z-score achieve $\alpha$, $p_\sigma=\frac 14 + \frac 12 \int_{\frac{2p}{k\sigma}}^{\infty}\psi(x)dx$, and $\psi(\cdot)$ denotes probability density function (PDF) of the standard normal distribution. Moreover, to achieve a $95\%$ confidence interval, we have
    $$
    n_h > n_w + (\frac{(n-n_w)!}{0.05n!}\sum_{n_x=0}^{n_w-n_\alpha}\sum_{n_y=n_x}^{n_w}\binom{n_y}{n_x}\binom{n_w}{n_y}p_\sigma^{n_x}(1-p_\sigma)^{n_y-n_x})^{\frac {-1}{n_w}}.
    $$
\end{proposition}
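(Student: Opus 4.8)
The plan is to follow the proof of Proposition~\ref{prop:erasure} essentially verbatim, since the distribution of the injected noise $\epsilon$ enters only through the single-cell flip probability $p_\sigma$; once $p_\sigma$ is pinned down, both the expectation formula and the $95\%$-confidence bound come from identical hypergeometric--binomial bookkeeping. Accordingly, I would isolate the computation of $p_\sigma$ as the one genuinely new ingredient and reuse everything downstream.

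First I would compute the probability that a single green key cell is flipped to red. Modeling the cell's offset inside its unit green domain as $x \sim U[0, 2p/k]$ and adding $\epsilon \sim \mathcal{N}(0, \sigma^2)$, the cell leaves its domain with probability $\Pr(x+\epsilon > 2p/k) + \Pr(x+\epsilon < 0)$, and conditioned on leaving it lands in a red domain with probability $\tfrac12$ (the $1{:}1$ green-to-red ratio, $\gamma = 0.5$). Writing $a = 2p/(k\sigma)$ and $\Phi$ for the standard normal CDF, averaging the two tail probabilities over $x$ and using the symmetrization $x \mapsto 2p/k - x$ collapses the boundary terms and gives the exact flip probability
\[
p_{\mathrm{true}} \;=\; 1 - \mathrm{E}_x\!\left[\Phi\!\left(\tfrac{x}{\sigma}\right)\right] \;=\; 1 - \Phi(a) + \frac{\psi(0) - \psi(a)}{a},
\]
where the last equality uses integration by parts together with $\psi'(u) = -u\,\psi(u)$. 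Unlike the uniform case, the Gaussian has full support, so no lower bound on $\sigma$ is required.

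The nonlinearity of $\Phi$ prevents the $x$-dependence from cancelling exactly, so instead of an equality I would produce the clean upper bound $p_{\mathrm{true}} \le p_\sigma := \tfrac14 + \tfrac12 \int_a^\infty \psi(x)\,dx = \tfrac34 - \tfrac12\Phi(a)$. Reducing this inequality to $\int_0^a (u - \tfrac{a}{2})\,\psi(u)\,du \le 0$, then splitting the integral at $a/2$ and pairing $u = a/2 \pm t$, it follows from the unimodality (monotone decrease) of $\psi$ on $[0,\infty)$. Since $\mathrm{E}[n_h] = n(n_w - n_\alpha)/(n_w\,p_{\mathrm{true}})$ is inversely proportional to the flip probability, replacing $p_{\mathrm{true}}$ by the larger $p_\sigma$ turns the equality of Proposition~\ref{prop:erasure} into the stated lower bound $\mathrm{E}[n_h] \ge n(n_w - n_\alpha)/(n_w\,p_\sigma)$.

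Finally, for the $95\%$-confidence bound I would reuse the argument of Proposition~\ref{prop:erasure} unchanged: letting $N_x$ count the key cells flipped from green to red, its law is the same hypergeometric selection of tampered key cells followed by a $\mathrm{Binomial}(\cdot, p_\sigma)$ flip, so the identical coefficient manipulation (bounding $\binom{n-n_w}{n_h-n_y} \ge \binom{n-n_w}{n_h-n_w}$ for moderate $n_h$, and $n_h(n_h-1)\cdots(n_h-n_w+1) > (n_h-n_w)^{n_w}$) yields the quoted inequality for $n_h$ with the new $p_\sigma$ substituted in. The main obstacle is the middle step: establishing the closed-form upper bound on the Gaussian flip probability, i.e.\ controlling $\mathrm{E}_x[\Phi(x/\sigma)]$ and proving $\int_0^a (u - a/2)\,\psi(u)\,du \le 0$; the expectation and confidence-interval parts are then mechanical transcriptions of the uniform case.
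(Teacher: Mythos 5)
Your proposal is correct and follows the paper's overall skeleton (isolate the single-cell flip probability $p_\sigma$, then reuse the expectation and confidence-interval bookkeeping of Proposition~\ref{prop:erasure} verbatim), but the one step that actually requires new work---justifying $p_\sigma=\tfrac14+\tfrac12\int_{2p/(k\sigma)}^{\infty}\psi$ as an upper bound on the flip probability---is handled differently. The paper takes a pointwise maximum over the offset $x\in[0,2p/k]$: it observes that $\Pr(\epsilon>2p/k-x)+\Pr(\epsilon<-x)\le \Pr(\epsilon>2p/k)+\Pr(\epsilon<0)$, which holds because the escape probability $g(x)=1-\Phi((2p/k-x)/\sigma)+\Phi(-x/\sigma)$ decreases on $[0,p/k]$, increases on $[p/k,2p/k]$, and takes equal values at the two endpoints, so its maximum is $g(0)=\tfrac12+\int_{a}^{\infty}\psi$ with $a=2p/(k\sigma)$. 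You instead compute the exact \emph{average} flip probability over $x\sim U[0,2p/k]$, obtaining $1-\Phi(a)+(\psi(0)-\psi(a))/a$ via integration by parts, and then prove it is at most $p_\sigma$ by reducing to $\int_0^a(u-a/2)\psi(u)\,du\le 0$ and pairing $u=a/2\pm t$ using the monotone decrease of $\psi$; both reductions are valid, and since an upper bound on the flip probability is all that is needed to lower-bound $\mathrm{E}[n_h]$, either suffices. The paper's route is a one-line domination argument; yours costs a short integral computation but yields the exact expected flip probability as a by-product (showing how much slack the stated $p_\sigma$ carries) and, as you note, dispenses with any side condition on $\sigma$ analogous to the $\sigma\ge p/k$ assumption in the uniform case. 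The downstream expectation formula and the $95\%$-confidence inequality are then, as you say, mechanical substitutions of the new $p_\sigma$ into Proposition~\ref{prop:erasure}, exactly as in the paper.
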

\begin{proof}
    Similarly, we only need to compute the maximum probability that a green cell is changed to red and substitute it into $p_\sigma$ of Proposition~\ref{prop:erasure}, with $Pr(\epsilon>2p/k-x)+Pr(\epsilon<-x)\leq Pr(\epsilon>2p/k)+Pr(\epsilon<0)=\frac 12 + \int_{\frac{2p}{k\sigma}}^{\infty}\psi(x)dx$.
\end{proof}
\begin{proposition}
    Denote by $n_h$ the amount of cells an attacker tamper with. When each cell is added an \textnormal{i.i.d} noise from Laplace distribution $\epsilon \sim \mathrm{Lap}(0,\sigma/\sqrt{2})$. The expectation of $n_h$ for the watermark removing is $$\mathrm{E}[n_h]\geq\frac{n(n_w-n_\alpha)}{n_wp_\sigma},$$
     where $n_\alpha = \alpha \sqrt{\frac{n_w}{4}}+\frac{n_w}{2}$ is the least amount of green cells to make the z-score achieve $\alpha$ and $p_\sigma=\frac 14 (1+\exp\{-\sqrt{2}p/k\sigma\})$. Moreover, to achieve a $95\%$ confidence interval, we have
    $$
    n_h > n_w + (\frac{(n-n_w)!}{0.05n!}\sum_{n_x=0}^{n_w-n_\alpha}\sum_{n_y=n_x}^{n_w}\binom{n_y}{n_x}\binom{n_w}{n_y}p_\sigma^{n_x}(1-p_\sigma)^{n_y-n_x})^{\frac {-1}{n_w}}.
    $$
\end{proposition}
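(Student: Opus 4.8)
The plan is to reduce this proposition to Proposition~\ref{prop:erasure}, in exactly the way the Gaussian case does. Reading that proof, the noise distribution enters only through the single quantity $p_\sigma$, the probability that a perturbed green cell is registered as red; once $p_\sigma$ is identified, both the expectation $\mathrm{E}[n_h]=n(n_w-n_\alpha)/(n_w p_\sigma)$ and the $95\%$ confidence-interval bound carry over word for word, since the hypergeometric counting of $N_x$, the definition $n_\alpha=\alpha\sqrt{n_w/4}+n_w/2$, and the final inequality never reference the particular noise law. Hence the whole task is to recompute $p_\sigma$ for $\epsilon\sim\mathrm{Lap}(0,\sigma/\sqrt{2})$ and substitute it into the two displayed formulas.

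To obtain $p_\sigma$ I would repeat the two steps of Proposition~\ref{prop:erasure}. Modeling a green cell as a point $x$ uniform on its unit green domain, the cell escapes that domain with probability $Pr(\epsilon>2p/k-x)+Pr(\epsilon<-x)$, and under the $\gamma=0.5$ partition an escaped cell lands red with probability $\tfrac12$, so $p_\sigma=\tfrac12\bigl(Pr(\epsilon>2p/k-x)+Pr(\epsilon<-x)\bigr)$ maximized over the cell position $x$ to give the worst case, and thus a valid lower bound on $n_h$. This maximization is the only step the Gaussian proof asserts without argument, so I would supply it: writing the escape probability as $f(x)=[1-F(2p/k-x)]+F(-x)$ and differentiating gives $f'(x)=f_\epsilon(2p/k-x)-f_\epsilon(x)$, which, because the Laplace density $f_\epsilon$ is symmetric about $0$ and unimodal, is negative on the left half of the domain and positive on the right half; therefore $f$ attains its maximum at an endpoint, and by symmetry that maximum equals $\tfrac12+Pr(\epsilon>2p/k)$.

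Finally I would evaluate the Laplace tail. Choosing the scale $b=\sigma/\sqrt{2}$ makes $\mathrm{Var}(\epsilon)=2b^2=\sigma^2$, so $\sigma$ is genuinely the standard deviation, and the survival function has the closed form $Pr(\epsilon>t)=\tfrac12 e^{-t/b}=\tfrac12 e^{-\sqrt{2}\,t/\sigma}$ for $t\ge0$. Substituting the relevant boundary distance for $t$ and combining with the boundary factor $\tfrac12$ collapses $\tfrac12+Pr(\epsilon>\cdot)$ into the stated $p_\sigma=\tfrac14\bigl(1+\exp\{-\sqrt{2}p/k\sigma\}\bigr)$; plugging this value into the expectation and the confidence-interval inequality of Proposition~\ref{prop:erasure} completes the proof. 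I expect no serious obstacle, as the argument is structurally identical to the Gaussian case; the only genuine care lies in justifying the maximization-over-$x$ inequality for the heavier-tailed Laplace law, which reduces to its symmetric, unimodal shape as above, and in tracking the reparameterization $b=\sigma/\sqrt{2}$ so that the exponent emerges cleanly in terms of $\sigma$.
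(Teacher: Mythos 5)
Your proposal follows the paper's proof essentially verbatim: reduce everything to Proposition~\ref{prop:erasure}, bound the probability that a perturbed green cell escapes its unit domain by its value at the endpoints, halve it for the chance of landing red, and evaluate the Laplace tail with scale $b=\sigma/\sqrt{2}$; your explicit justification of the endpoint maximization via the symmetry and unimodality of the Laplace density fills in a step the paper merely asserts, but it is the same route, not a different one. One bookkeeping caveat: evaluating the tail at the boundary distance $2p/k$, as your maximization dictates, gives $Pr(\epsilon>2p/k)=\tfrac12 e^{-2\sqrt{2}p/(k\sigma)}$ and hence $p_\sigma=\tfrac14\bigl(1+e^{-2\sqrt{2}p/(k\sigma)}\bigr)$, whereas the stated $p_\sigma=\tfrac14\bigl(1+e^{-\sqrt{2}p/(k\sigma)}\bigr)$ corresponds to the tail at $p/k$ --- a factor-of-two slip already present in the paper's own computation (its integral $\int_{\sqrt{2}p/(k\sigma)}^{\infty}\tfrac12 e^{-x}\,dx$ equals $Pr(\epsilon>p/k)$, not $Pr(\epsilon>2p/k)$), and since the stated value is the larger of the two the proposition's lower bound on $n_h$ remains valid, merely looser than what your derivation actually yields.
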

\begin{proof}
    Similarly, we only need to compute the maximum probability that a green cell is changed to red, with $Pr(\epsilon>2p/k-x)+Pr(\epsilon<-x)\leq Pr(\epsilon>2p/k)+Pr(\epsilon<0)=\frac 12 + \int_{\frac{\sqrt{2}p}{k\sigma}}^{\infty}\frac 12 e^{-x}dx = \frac 12 (1+\exp\{-\sqrt{2}p/k\sigma\})$.
\end{proof}

% \todo{conclusion? e.g., when xxx, $n_h$ are all relatively high xxx }

\partitle{Analysis and Conclusions} \(n_h\) required to erase the watermark with a 95\% confidence is obviously greater than the expected \(n_h\) when erasing the watermark, so we analyze the robustness of the watermark from \(E[n_h]\). The upper bound of $p_\sigma$ is strictly less than $0.5$, since there is still a fifty percent chance that the changed interval remains green. Moreover, the upper bound of $p_\sigma$ is greater in the case of Gaussian noise or Laplace noise when the variance of the noise is small, which reflects the potential of the latter two strategies to reduce $E[n_h]$ in such cases.

For \(E[n_h]\), even if we assume that every perturbation can flip green cells, we have \(E[n_h] \ge \frac{n(n_w-n_\alpha)}{n_wp_\sigma} = \frac{n(n_w-n_\alpha)}{n_w} = n\left(1- \frac{n_\alpha}{n_w}\right)\) when added noises are from uniform distribution, Gaussian distribution noise, or Laplace distribution. Considering that \(n_\alpha\) approaches half of \(n_w\), \(E[n_h]\) is also close to half of \(n\), much greater than \(n_w\), indicating that attackers need to pay a much higher cost than the data owner to erase the watermark.

\section{Experiments and Analysis}\label{sec:experiments}
% \todo{Elimination of duplication. eg. with a mean of mu = 0, a variance of sigma = 20, and a total of 2000 tuples}
% \todo{Modified tense}
% \todo{the organization of exp}

In this section, we organize experiments by answering five research questions (RQs). Following the three desiderata mentioned in Section~\ref{sec:intro}, we design four related RQs to demonstrate the superiority of \ourmethod. To further explore the usage of \ourmethod, we complement one additional RQ on hyper-parameter selection.
% Experiments should be conducted to verify whether the watermarking scheme addresses the following research questions (RQs). 

\begin{itemize}[leftmargin=3em]
    \item[\textbf{RQ1)}] 
    Is the \ourmethod detectable for various tabular datasets? (Section \ref{subsec:rq1}) %  \blueno{\textbf{RQ1 (Reliability)}:
    \item[\textbf{RQ2)}] 
    Can \ourmethod achieve non-intrusiveness across various tabular datasets used for ML? (Section \ref{subsec:rq2}) % \blueno{\textbf{RQ2 (Robustness and Non-Intrusiveness)}: 
    \item[\textbf{RQ3)}] 
    Can \ourmethod be robust against common malicious attacks? (Section \ref{subsec:rq3}) 
    \item[\textbf{RQ4)}]
    How does \ourmethod perform compared to the related state-of-the-art schemes? (Section \ref{subsec:rq4})
    \item[\textbf{RQ5)}] 
    What are the trade-offs among the hyper-parameters of \ourmethod? (Section \ref{subsec:rq5}) %  \blueno{\textbf{RQ3 (Design Trade-Offs)}: 
    % \item[\textbf{(RQ5)}] 
    % Are there any practical optimization strategies to improve \ourmethod? (Section \ref{subsec:rq5}) % \blueno{\textbf{RQ4 (Optimization)}: 
    % \todo{moreover, we discuss related optimization}
\end{itemize}

Moreover, we attempt to modify the perturbation approach to optimize \(\text{\ourmethod}\), and in Section~\ref{subsec:rq6}, we validate that the optimization strategy possesses better non-intrusiveness than the original.

% \todo{Adjust the table to the passive voice}
% \todo{Reduce repeat references}

\partitle{Experimental Setup}
Experiments are conducted on a server comprising two Intel(R) Xeon(R) Platinum 8383C CPU@2.70GHz\nop{and four Geforce RTX 3090 GPUs}, running Ubuntu 18.04 LTS 64-bit with 256GB memory. We use NumPy to randomly generate two-dimensional normal distribution data as synthetic datasets, with a mean of \(\mu\) = 0, a variance of \(\sigma\) = 20, and a total of 2000 tuples to run. Forest Cover Type~\cite{misc_covertype_31}, HOG feature based on the digits dataset~\cite{misc_pen-based_recognition_of_handwritten_digits_81}, and Boston Housing Prices~\cite{HARRISON197881} are employed as real-world datasets. Notably, the HOG feature dataset is generated with the histogram of oriented gradients (HOG) features extracted from the digits dataset, combined with their categories.
% In many ML tasks, models are not trained directly on original images but rather on features extracted from these images. Converting the extracted features of an image dataset, along with their corresponding labels, into a vector dataset creates an important data asset. 
% When plotting receiver operating characteristic (ROC) curves, we use watermarked synthetic datasets with varying degrees of perturbation along with the original synthetic datasets that have been similarly perturbed. 
% In Section \ref{subsec:rq1}, we verify the reliability of \ourmethod on these datasets.
% In Sections \ref{subsec:rq2} and \ref{subsec:rq3}, we focus on 
For ML models, XGBoost~\cite{chen2016xgboost}, Random Forest~\cite{breiman2001random}, and Linear Regression~\cite{montgomery2021introduction} are used to cover both classification and regression tasks. 
%In Section \ref{subsec:rq4}, we generate labels on synthetic datasets by combining two dimensions with weights in the range of $[-1, 1]$, and the weighted sum is passed through a logistic function to compute the probability of an event occurring (i.e. the label is 1). In Section ~\ref{subsec:rq5}, the Boston Housing Prices dataset is employed to test the optimization strategy. 
Besides, the z-score threshold defaults to 1.96 to maintain the theoretical constraint of a 5\% false positive rate.
%To train models, we use Intel(R) Xeon(R) Platinum 8383C CPU @ 2.70GHz.

\partitle{Reproducibility}
For reproducibility, we set a specific random seed to ensure that the data distortion and model training effects are repeatable on the same hardware and software stack. To eliminate the randomness of the results, the experimental outcomes are generally averaged over multiple runs.

\subsection{Detectability}\label{subsec:rq1}
In this section, we investigate the z-scores on the original, watermarked, perturbed datasets to answer \textbf{RQ1}. One synthetic and three real-world datasets are utilized to verify the detectability of \ourmethod. 
% owing to the easier generation of synthetic datasets compared with real-world datasets. 
% A reliable watermarking scheme should be able to distinguish watermarked datasets from non-watermarked datasets by detecting the watermark in the former and no watermark in the latter. 

% Since we use the one proportion z-test as the watermark detection tool, we can flexibly control the false positive rate at a low level by controlling the z-score threshold with the properties of normal distribution and z-test~\cite{llm_watermark}. For example, when the z-score threshold is 1.96, the p-value and the false-positive rate are 0.05. 
% In digital watermarking, \textit{false hits} occur when a watermark detection algorithm mistakenly identifies a watermark in data where no watermark exists. This can undermine the reliability of the watermarking system and is a critical factor in evaluating its performance. 

In the experiment, synthetic datasets are watermarked with \( n_w = 300 \), \( p = 2\sigma \), and \( k = 500 \) on the first dimension. Forest Cover Type dataset is watermarked with \( n_w = 300 \) on the attribute Cover\_Type. HOG feature dataset is watermarked with \( n_w = 150 \) on the attribute \textit{category}. Boston Housing Prices dataset is watermarked with \( p = 25 \), \( k = 500 \), and \( n_w = 50 \) on the attribute \textit{MEDV}.
Subsequently, watermark detection is performed on the original dataset $D_o$, the watermarked dataset $D_w$, and the dataset $D_p$ generated after randomly perturbing $D_o$. Results of detected z-scores are shown in Table~\ref{tab:falsehit}. 
% The dataset $D_i$ generated independently of $D_o$. 

% Additionally, \textcolor{blue}{it also indicates that the watermark can be reliably detected within $D_w$}.

% \begin{table}[h!]
%   \centering
%   \caption{Watermark detection results on the synthetic dataset.}
%   \label{tab:falsehit}
%   \begin{tabular}{@{}ccccc@{}}
%     \toprule
%     Dataset & $D_o$ & $D_w$ & $D_p$ & $D_i$ \\
%     \midrule
%     Z-score & -0.0531 & 17.3 & -0.0393 & -1.13 \\
%     \bottomrule
%   \end{tabular}
% \end{table}

\begin{table}[h!]
  \centering
  \caption{Z-scores.}
  \label{tab:falsehit}
  \begin{tabular}{@{}p{2.3cm}>{\centering\arraybackslash}p{1.8cm}>{\centering\arraybackslash}p{1.8cm}>{\centering\arraybackslash}p{1.2cm}@{}}
    \toprule
    Dataset & $D_o$ & $D_w$ & $D_p$ \\
    \midrule
    Synthetic & -0.0531 & 17.3 & -0.0393 \\
    Forest & -0.0159 & 18.6  & -0.0467\\
    HOG & -0.327 & 12.3 & 0.128\\
    Boston Housing & -0.113 & 6.91 & -0.142\\
    \bottomrule
  \end{tabular}
\end{table}

The z-scores in \(D_w\)s are all significantly larger than the threshold 1.96, and the z-scores on the other two types of datasets are significantly smaller than the threshold, suggesting that \ourmethod is detectable.
In addition, we utilized $500 \pm 10$ synthetic datasets to plot the ROC charts in Figure ~\ref{fig:ruc}. The curve indicates that \ourmethod can reliably identify the presence of a watermark in the vast majority of cases.

\begin{figure}[ht] 
\centering 
% \vspace{-0.45cm}
\includegraphics[width=0.8\linewidth]{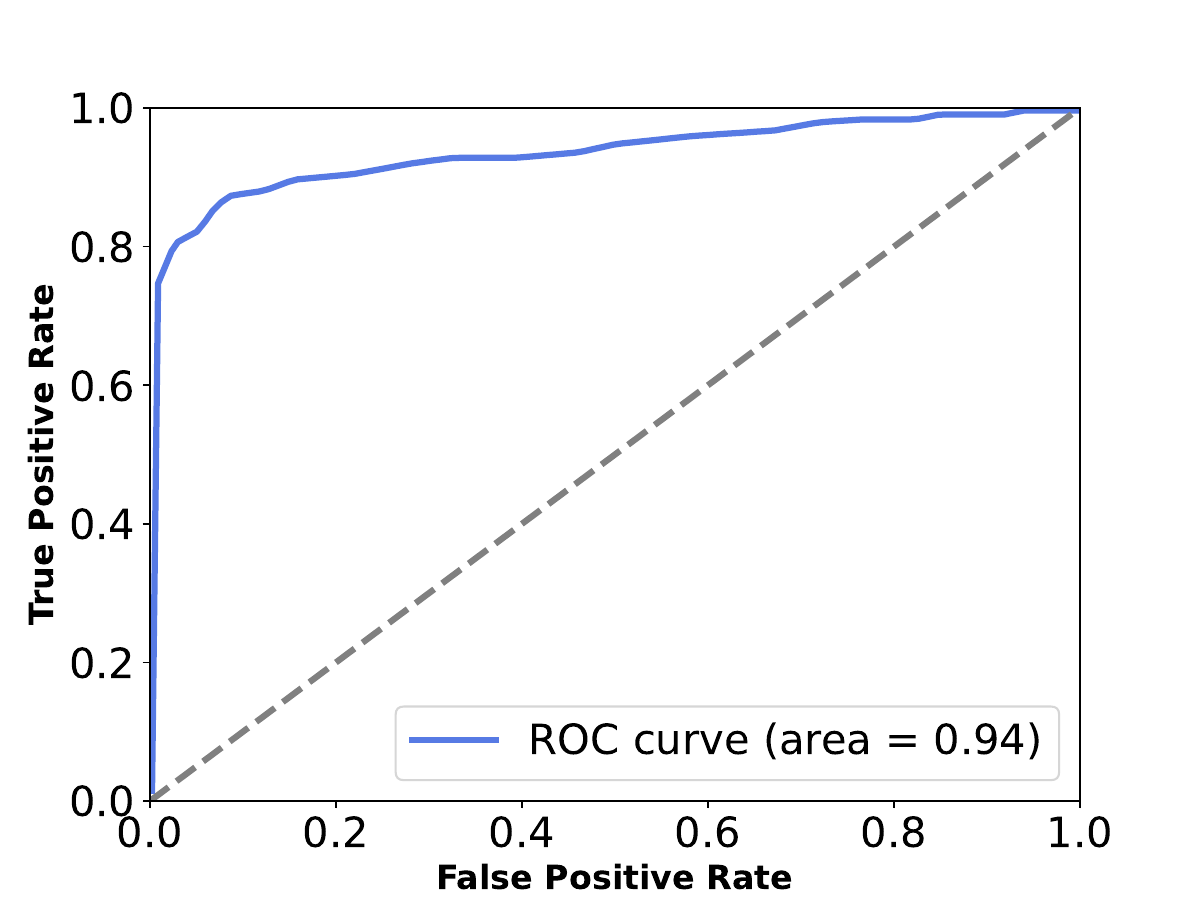} 
\caption{ROC curve.} 
%\caption{ROC curves with AUC values for watermark detection.} 
\label{fig:ruc} 
% \vspace{-0.5cm}
\end{figure}

% \todo{split Robustness and Non-Intrusiveness; In 4.3 follow the setting and simplify expression}

% \todo{we investigate metric on Do and Dw; what datasets;}
\subsection{Non-Intrusiveness} 
\label{subsec:rq2}
In this section, we investigate ML-related metrics on the original and watermarked datasets to answer \textbf{RQ2}. Three real-world datasets, including the Forest Cover Type dataset, the HOG feature dataset, and the Boston Housing Prices dataset, are utilized to verify the non-intrusiveness of \ourmethod 
% does not compromise ML-related metrics, including \(F_1\)-score, accuracy on classification tasks, and mean squared error (MSE) on regression tasks.

% \subsubsection{Forest Cover Type Dataset} 
We train XGBoost classification models separately on \(D_o\) and \(D_w\). Then we evaluate their classification performance. The $F_1$-scores for categories 2, 4, and 6 are presented in Table ~\ref{tab:rob1_class}. It can be observed that the ML utility of $D_w$, compared to $D_o$, does not significantly decrease, which implies that \ourmethod hardly impairs the ML utility of the watermarked dataset. This indicates that \ourmethod is non-intrusive. 

\begin{table}[h!]
  \centering
  \caption{$\bm{F_1}$-scores.}
  \label{tab:rob1_class}
  \begin{tabular}{@{}>{\centering\arraybackslash}p{1.5cm}>{\centering\arraybackslash}p{1.8cm}>{\centering\arraybackslash}p{1.8cm}>{\centering\arraybackslash}p{1.8cm}@{}}
    \toprule
    Dataset & Category 2 & Category 4 & Category 6  \\
    \midrule
    $D_o$ & 0.888 & 0.940 & 0.848 \\
    $D_w$ & 0.887 & 0.937 & 0.845 \\
    \bottomrule
  \end{tabular}
\end{table}

% \todo{modify}

% \subsubsection{HOG Feature Dataset}
\partitle{More Datasets} 
We continue to verify the non-intrusiveness of \ourmethod on the HOG feature dataset and Boston Housing Prices dataset. Random Forest and linear regression models are employed to assess their ML utility. The experimental results are as follows. 
 
% \begin{table}[h!]
%   \centering
%   \caption{Accuracy on HOG feature dataset.}
%   \label{tab:rob2_o_w}
%   \begin{tabular}{@{}p{2.4cm}p{2.4cm}p{1.0cm}@{}}
%     \toprule
%     Dataset & $D_o$ & $D_w$ \\
%     \midrule
%     Accuracy & 0.942 & 0.943 \\
%     \bottomrule
%   \end{tabular}
% \end{table}

 \begin{table}[h] %\footnotesize
    \centering
    \caption{ML utility on more datasets.}\label{tab:rob2_o_w}
    % \vspace{-10pt}
    \begin{minipage}{0.45\linewidth}
      \centering
      \text{(a) Accuracy on \textit{HOG}.}\label{subtab:omc}\\
        \begin{tabular}{ccc}
        \toprule
        Dataset & $D_o$ & $D_w$ \\
        \midrule
        Accuracy & 0.942 & 0.940 \\
        \bottomrule
      \end{tabular}
    \end{minipage}
    \begin{minipage}{0.45\linewidth}
      \centering
          \text{(b) MSEs on \textit{Boston Housing}.}\label{subtab:nmc}\\
          \begin{tabular}{ccc}
            \toprule
            Dataset & $D_o$ & $D_w$ \\
            \midrule
            MSE & 24.8 & 25.6 \\
            \bottomrule
          \end{tabular}
    \end{minipage}
 \end{table}

The experimental results show that the ML-related metrics, including MSE and accuracy, are slightly reduced after watermarking, demonstrating the non-intrusiveness of \ourmethod. 

% \subsubsection{Boston Housing Prices Dataset} \label{subsec:reg_exp}

% \begin{table}[h!]
%   \centering
%   \caption{MSEs on \textit{Boston Housing Prices}.}
%   \label{tab:rob3_o_w}
%   \begin{tabular}{@{}p{2.4cm}p{2.4cm}p{1.0cm}@{}}
%     \toprule
%     Dataset & $D_o$ & $D_w$ \\
%     \midrule
%     MSE & 24.8 & 25.6 \\
%     \bottomrule
%   \end{tabular}
% \end{table}

% \todo{Consistent with the start of 4.3.1}
\subsection{Robustness} \label{subsec:rq3}
% \todo{follow the previous schemes, we mainly test robustness on dataset Forest Cover Type; compare with related work}
% In this section, we answer \textbf{RQ3} by verifying the robustness against alteration, insertion, and deletion attacks. We will use the same dataset and watermark parameter settings as in \ref{subsec:rq1}.
In this section, we investigate the z-scores and ML-related metrics on the attacked datasets to answer \textbf{RQ3}. Following the proposed schemes~\cite{agrawal2002watermarking,hu2018new,li2020robust}, we mainly verify the robustness of \ourmethod on the real-world dataset Forest Cover Type. 

% \subsubsection{Forest Cover Type Dataset} 
\label{subsec:4.3.1}
\partitle{Alteration Attack} An alteration attack on a dataset refers to a type of attack where an attacker intentionally modifies the attributes of the dataset in a random or seemingly random manner. The randomness of the changes may make it difficult to detect watermarks. In this experiment, we assume that the attacker is aware of the attribute to which the watermark has been embedded (in actual scenarios, the attribute used for watermarking is generally kept secret). We simulate the attacker randomly perturbing \(20\%\), \(40\%\), \(60\%\), \(80\%\) and \(100\%\) of the attribute {Cover Type} and examine the z-scores under various alteration attacks along with the ML utility impact on the XGBoost model. Table~\ref{tab:rob1_alt_z} and Table~\ref{tab:rob1_alt_xgb} respectively show the results of the z-scores and the classification performance for the dataset under alteration attacks. 

\begin{table}[h!]
  \centering
  \caption{Z-scores.}
  \label{tab:rob1_alt_z}
  \begin{tabular}{@{}cccccc@{}}
    \toprule
    Alteration Attack ($$\%$$) & 20 & 40 & 60 & 80 & 100 \\
    \midrule
    Z-score & 13.8 & 8.67 & 3.68 & -1.11 & -6.05 \\
    \bottomrule
  \end{tabular}
  % \vspace{-0.5cm}
\end{table}

\begin{table}[h!]
  \centering
  \caption{$\bm{F_1}$-scores.}
  \label{tab:rob1_alt_xgb}
  \begin{tabular}{@{}ccccccc@{}}
    \toprule
    Alteration Attack ($\%$) & Category 2 & Category 4 & Category 6  \\
    \midrule
    20 & 0.879 & 0.853 & 0.813 \\
    40 & 0.867 & 0.791 & 0.761 \\
    60 & 0.847 & 0.661 & 0.645 \\
    80 & 0.679 & 0.0631 & 0.115 \\
    100 & 0.00203 & 0.00180 & 0.000773 \\
    \bottomrule
  \end{tabular}
\end{table}

We can observe that, given \( \alpha = 1.96 \), when the proportion of the alteration attack reaches 80\%, the watermark detection algorithm can no longer detect the watermark in the attacked dataset. However, at this point, the $F_1$-scores for these three categories decrease significantly, indicating that the ML utility of the attacked dataset for the model has been greatly reduced. Although the attacker successfully erases the watermark, the trained model becomes inoperable, which implies that \ourmethod is robust under alteration attacks.

\partitle{Insertion Attack} An insertion attack on a dataset refers to a type of attack where tuples are created randomly and inserted into the dataset to destroy the watermark. In this experiment, we simulate the attacker randomly inserting \(20\%\), \(40\%\), \(60\%\), \(80\%\), \(100\%\) tuples into the watermarked dataset and examine the z-scores under various insertion attacks. Due to the insertion attack disrupting the order of tuples in the dataset, we employ the first two attributes and the first three attributes of the Forest Cover Type as primary keys respectively to match tuples. Table~\ref{tab:ins_z_2} and Table~\ref{tab:ins_z_3} display the z-score detection results for these two matching methods, respectively.

\begin{table}[h!]
  \centering
  \caption{Z-scores (two-attribute matching).}
  \label{tab:ins_z_2}
  \begin{tabular}{@{}cccccc@{}}
    \toprule
   Insertion Attack ($$\%$$) & 20 & 40 & 60 & 80 & 100 \\
    \midrule
    Z-score & 8.23 & 7.43 & 6.87 & 6.24 & 5.73 \\
    \bottomrule
  \end{tabular}
\end{table}

\begin{table}[h!]
  \centering
  % \vspace{-0.5cm}
  \caption{Z-scores (three-attribute matching).}
  \label{tab:ins_z_3}
  \begin{tabular}{@{}cccccc@{}}
    \toprule
   Insertion Attack ($$\%$$) & 20 & 40 & 60 & 80 & 100 \\
    \midrule
    Z-score & 18.3 & 18.1 & 18.1 & 17.9 & 17.8\\
    \bottomrule
  \end{tabular}
\end{table}

The results from the two tables indicate that, under the assumption of a threshold \( \alpha = 1.96 \), watermarks can still be effectively detected even when up to 100\% of tuples are inserted relative to the size of the original dataset, which signifies that \ourmethod is robust under insertion attacks.

\partitle{Deletion Attack} A deletion attack on a dataset refers to a type of attack where tuples are deleted randomly to destroy the watermark. In this experiment, we simulate the attacker randomly deleting \(20\%\), \(40\%\), \(60\%\), \(80\%\), \(100\%\) tuples from the watermarked dataset and examine the z-scores under various deletion attacks. Due to the same reason with the insertion attack, we also employ. the first two attributes and the first three attributes of the Forest Cover Type as primary keys respectively to match tuples. Tables~\ref{tab:del_z_2} and ~\ref{tab:del_z_3} display the z-score detection results for these two matching methods, respectively. 

\begin{table}[h!]
  \centering
  \caption{Z-scores (two-attribute matching).}
  \label{tab:del_z_2}
  \begin{tabular}{@{}cccccc@{}}
    \toprule
   Deletion Attack ($$\%$$) & 20 & 40 & 60 & 80 & 100 \\
    \midrule
    Z-score & 8.42 & 7.45 & 6.13 & 4.59 & $\backslash$ \\
    \bottomrule
  \end{tabular}
\end{table}

\begin{table}[h!]
  \centering
  \caption{Z-scores (three-attribute matching).}
  \label{tab:del_z_3}
  \begin{tabular}{@{}cccccc@{}}
    \toprule
    Deletion Attack ($$\%$$) & 20 & 40 & 60 & 80 & 100\\
    \midrule
    Z-score & 16.7 & 14.4 & 11.6 & 8.21 & $\backslash$ \\
    \bottomrule
  \end{tabular}
  \vspace{0.32cm}
\end{table}

The results from the two aforementioned tables indicate that, under the assumption of a threshold \( \alpha = 1.96 \), the watermark can still be effectively detected even in the event of a substantial deletion attack, which suggests that \ourmethod is robust under deletion attacks.

\partitle{Shuffle Attack} A shuffle attack on tabular datasets is a form of attack where the attacker rearranges or randomizes the order of rows in a table, thus disrupting the original sequence of the data. This attack specifically targets watermarking schemes that rely on row order, with the intent to the detectability of the watermark without compromising the utility of datasets. Shuffle attacks share many similarities with the insertion and deletion attacks mentioned previously. Using the matching algorithm within the watermark detection method, watermarks on datasets subjected to such attacks can be effectively identified.

\partitle{Attributes Subset Attack} In some scenarios, attackers may choose to take a subset of attributes of a dataset to disrupt the watermark. This can be mitigated in the watermarking scheme, as we choose to embed the watermark in the last attribute (target), making it infeasible for attackers to erase the watermark by simply removing this attribute. Furthermore, we have the option to add watermarks across multiple attributes. Such attacks might still pose challenges to the matching algorithm. Nevertheless, the experiments mentioned above demonstrate that even when only two attributes are used as a primary key for matching, the detection efficacy remains high.

\partitle{Invertibility Attack} An invertibility attack on tabular datasets targets the watermarking schemes to revert the dataset to its unmarked state while preserving data utility. For instance, if an attribute in a dataset is used for embedding a watermark, an attacker might employ statistical analysis to identify and modify or remove the anomalous values that do not conform to the expected statistical model, which is likely to represent the watermark information. If the attacker successfully identifies and reverses these modifications, the watermark may be invalid without significantly compromising the utility of data. \ourmethod requires only a small number of tuples relative to the total volume of the dataset for generating the watermark. For instance, in the aforementioned experiment, the Forest Cover Type dataset contains over half a million records, yet we only needed 400 tuples to embed the watermark. Such a subtle modification renders the task of identifying and eliminating the watermark exceedingly difficult for attackers.

% \subsubsection{PCA attack}

% \todo{Advantages over other jobs}
% \partitle{Comparison to Related Works} 

% \todo{compare with related work}
% \todo{adjust structure; more datasets}
% \subsubsection{HOG Feature Dataset}
\partitle{More Datasets} We continue to validate the robustness of \ourmethod on the HOG feature dataset and Boston Housing Prices Dataset. From Section~\ref{subsec:4.3.1}, due to our method being difficult to remove by other types of attacks, we only simulate alteration attacks.

\begin{table}[h!]
  \centering
  \caption{Z-scores and accuracy on HOG.}
  \label{tab:rob2_alter}
  \begin{tabular}{@{}cccccc@{}}
    \toprule
    Alteration Attack ($$\%$$) & 20 & 40 & 60 & 80 & 100 \\
    \midrule
    Z-score & 9.71 & 7.13 & 4.78 & 2.55 & -0.0816  \\
    Accuracy & 0.933 & 0.910 & 0.825 & 0.530 & 0.100  \\
    \bottomrule
  \end{tabular}
  % \vspace{-0.5cm}
\end{table}

\begin{table}[h!]
  \centering
  \caption{Z-scores and MSEs on Boston Housing.}
  \label{tab:rob3_alter}
  \begin{tabular}{@{}cccccc@{}}
    \toprule
    Alteration Attack ($$\%$$) & 20 & 40 & 60 & 80 & 100 \\
    \midrule
    Z-score & 5.30 & 3.76 & 2.17 & 0.571 & -1.10  \\
    MSE & 26.1 & 27.6 & 34.0 & 35.4 & 42.3  \\
    \bottomrule
  \end{tabular}
\end{table}

 The results show that attackers would incur substantial costs to make the watermark undetectable, a cost so high that it renders the trained model nearly unusable. This means that \ourmethod can efficiently safeguard the ownership of these datasets.

 % The results show that, for attackers to erase the watermark, the attack proportion needs to be $80\%$. At this point, the MSE reaches 35.4 which is a significant decrease in the performance compared to the model trained on the original dataset. This indicates that \ourmethod can effectively protect the ownership of datasets for regression tasks.

\subsection{Comparison} \label{subsec:rq4}
In this section, we compare related watermarking schemes with \ourmethod to answer \textbf{RQ4}. The real-world dataset Forest Cover Type is utilized to verify \ourmethod outperforms related work.

Due to the inconsistency of the detection metrics of \ourmethod with related relational database watermarking works, direct comparison with related work is not feasible. Most other works employ the bit error rate (BER) to measure the integrity of the extracted watermark information. In contrast, we use the z-score of the one proportion z-test as our detection metric. As we do not embed watermark plaintext, we use the proportion of key cells flipped from green domains to red domains as a metric similar to the BER. These metrics are uniformly represented as {Mismatch Percentage} (MP). Although this proportion does not directly indicate the strength of the watermark, it is the closest method available for comparison with metrics used in related work.

We consider two recent relational database watermarking schemes for comparison with \ourmethod: 1) HistMark: \citet{hu2018new} utilize histogram shifting to embed watermark information.  2) SemMark: \citet{li2022secure}  alter the random least significant position of the decimal numerical attributes to embed watermark information. Both methods are considered state-of-the-art reversible watermarking schemes and are capable of effectively controlling distortion. We will implement \ourmethod, HistMark, and SemMark on the Forest Cover Type dataset and compare them based on the three goals of a watermarking scheme. We use a subset of the Forest Cover Type dataset to avoid the adverse impact of having a much larger number of instances in certain categories on watermarking schemes. For the experiments, we randomly select 2000 tuples from each category. The number of key cells for \ourmethod and the length of the watermark bit sequence is set to 400, ensuring that the quantity of 0s and 1s in the watermark bit sequence is equal. The watermark is embedded in the attribute Cover\_Type. The classification model XGBoost is utilized to test the ML utility. Figure~\ref{fig:comp_f1} and Figure~\ref{fig:comp_mp} exhibit the results of the comparative experiments.

\begin{figure}
    \begin{subfigure}{0.25\textwidth}
        \includegraphics[width=\linewidth]{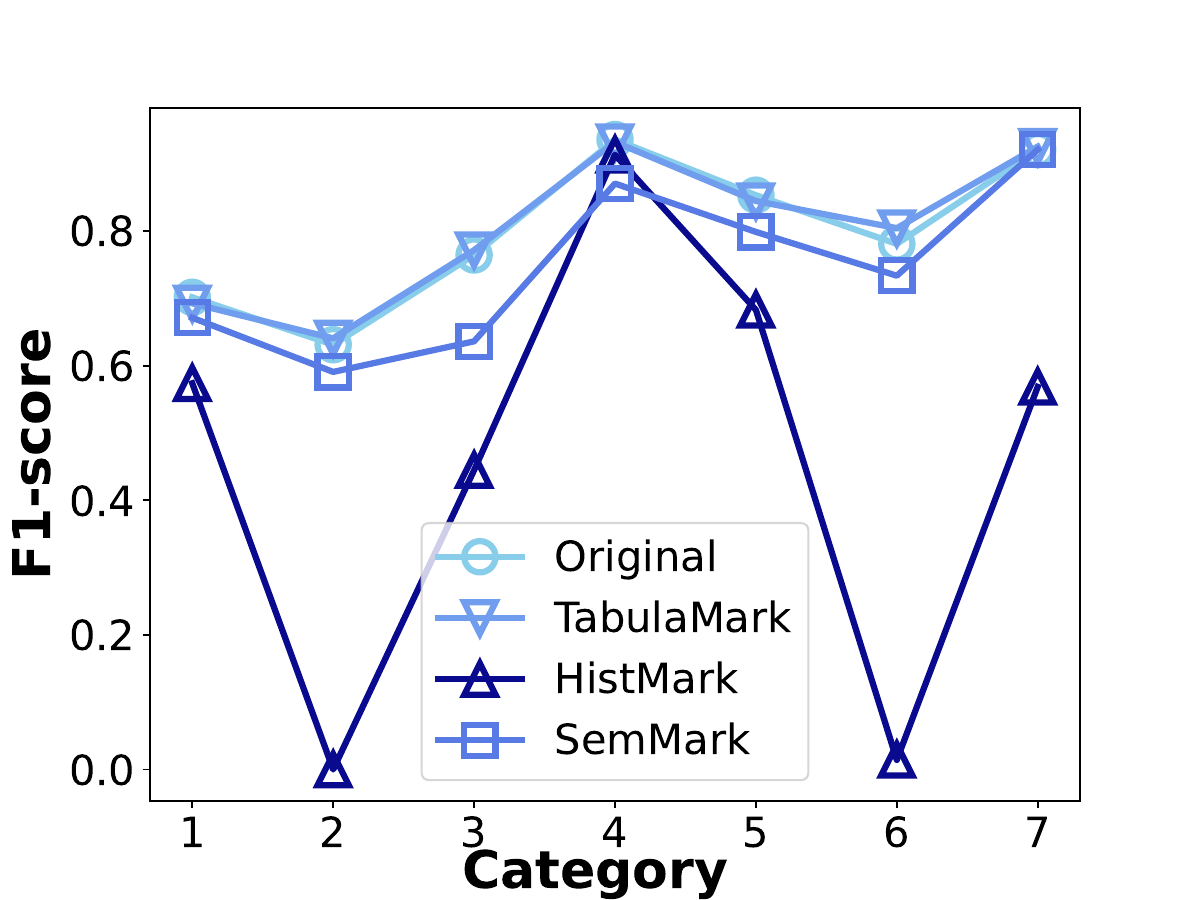}
        \caption{Accuracy}
        \label{fig:comp_f1}
    \end{subfigure}%
    \begin{subfigure}{0.25\textwidth}
        \includegraphics[width=\linewidth]{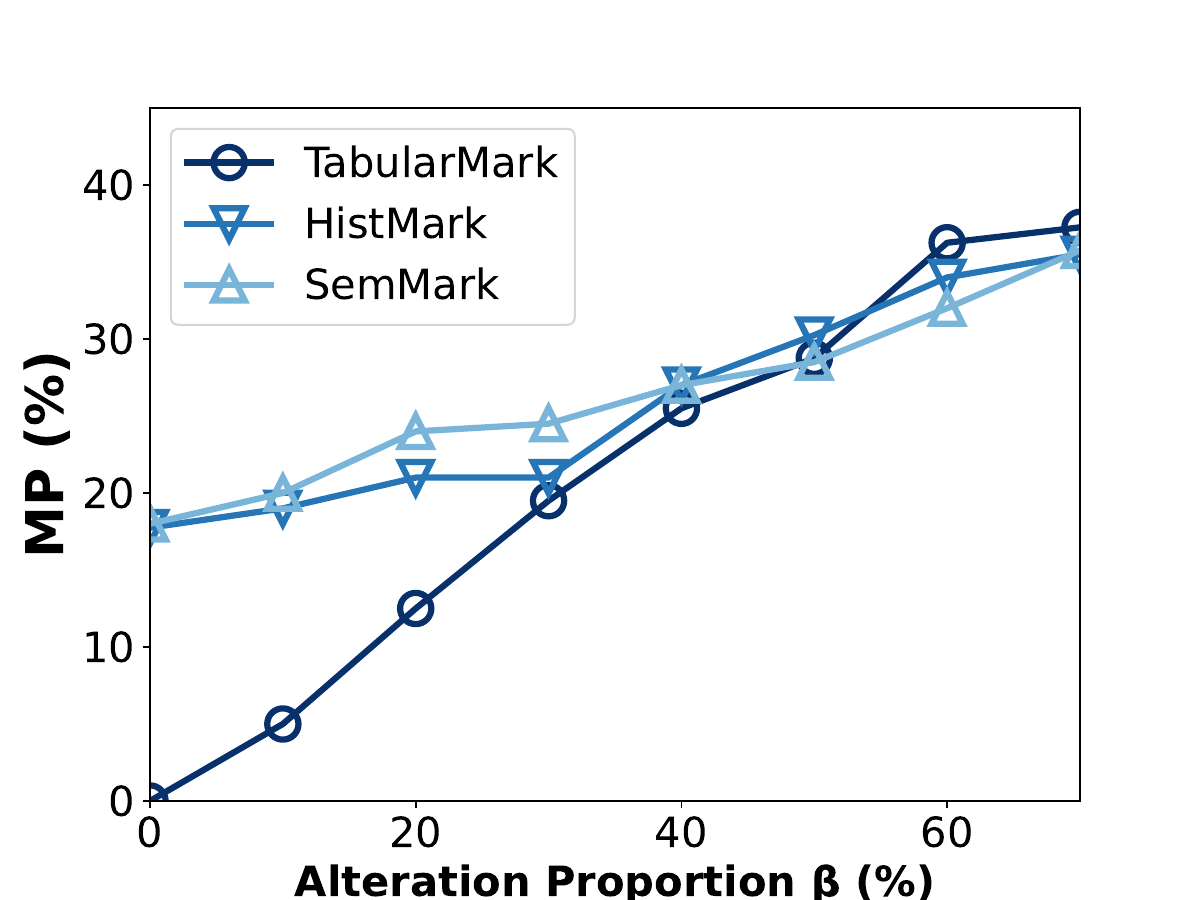}
        \caption{Mismatch percentage}
        \label{fig:comp_mp}
    \end{subfigure}
    \caption{Comparisons of \ourmethod (ours), HistMark, and SemMark on (a) non-intrusiveness and (b) robustness.}
\end{figure}

 Compared to the original dataset, the watermarked dataset of \ourmethod exhibits a negligible decline in the $F_1$-scores across various categories, with the lines almost overlapping. In contrast, the line for SemMark presents a noticeable gap relative to \ourmethod. This discrepancy is due to the requirement of SemMark to ensure successful watermark embedding and the use of a majority voting mechanism for robustness, which introduces a higher degree of data distortion. The outcomes for HistMark are even less satisfactory, which is due to histogram shifting that alters a large amount of data in every group. Even changes as minimal as $\pm 1$ can introduce excessive data distortion for categorical attributes.

 When the attack proportion is set to 0, the MP for all three watermarking schemes tends to be low, indicating that detectability is generally assured. However, the mechanisms employed by SemMark and HistMark may result in some watermark bits not being successfully embedded into the dataset. For HistMark, if the peak bin of the histogram of each group matches the maximum or minimum value of the corresponding attribute, it may not be possible to shift it, leading to the failure of watermark embedding for that group. In the case of SemMark, the location where a watermark bit is embedded within a tuple is randomly selected, which may cause some positions to be unselected. \ourmethod, on the other hand, does not rely on the embedding of specific watermark bits and is not susceptible to this issue.

 As shown in Figure~\ref{fig:comp_mp}, when the alteration attack proportion increases, aside from an initially slightly lower MP for \ourmethod, the plots of all three watermarking schemes converge beyond a 30\% proportion, suggesting that there is no significant disparity in the MP metric among the schemes. Although \ourmethod employs a threshold-based z-score for watermark detection rather than MP, this implies that the robustness of \ourmethod is comparable to that of state-of-the-art methods to a certain extent.

% \todo{A little more hierarchical}
\subsection{Design Trade-Offs} \label{subsec:rq5}
In this section, we conduct a qualitative analysis of the crucial parameters of \ourmethod and analyze their trade-offs to answer \textbf{RQ5}. Synthetic datasets are utilized to explore trade-offs to exclude external factors' interference.

% By observing the impact of different parameters on synthetic datasets, we address the following question: how should the different parameters of \ourmethod be configured? 
% Similar to the previous experiments, we use two-dimensional normal distribution data as the synthetic dataset, and the watermark is embedded into the first dimension. 
To generate labels for classification experiments, data across two dimensions is combined with weights in the range of \([-1, 1]\), and the weighted sum is passed through a logistic function to compute the probability of an event occurring. This probability is then used to conduct a binomial distribution experiment to generate a target array.  The numerical attribute Dimension 1 is utilized to investigate the trade-offs of the parameters. A logistic regression model is trained to evaluate data utility. We assume that the attacker employs an alteration attack, aiming to modify the dataset subtly in an attempt to remove the watermark without making overt changes that would degrade the ML utility of the dataset. We employ a logistic regression model as the classifier for experiments.

\subsubsection{The Perturbation Range $[-p,p]$} \label{subsubsec:range_p}
The parameter \( p \) is related not only to the strength of the watermark but also to the ML utility of the dataset. Here, we treat \(p\) as a variable, with values set at  \( 0.5\sigma \), \( 1\sigma \), \( 1.5\sigma \), \( 2\sigma \), and \( 2.5\sigma \). Other parameters are set as \( k = 500 \) and \( n_w = 300 \). We fix the alteration range at \(2\sigma\), meaning that they perturb the dataset within a range of \([-2\sigma, 2\sigma]\). The attack proportions are set at \( 20\%, 40\%, 60\%, 80\%, \) and \( 100\% \). 

In the following, we denote these proportions using the symbol \( \beta \). Table~\ref{tab:X_o_w} illustrates the impact of watermarking on the ML utility of the dataset, measured by the classification accuracy of the trained logistic regression model. Figure~\ref{fig:zscore_variation_p} and Figure~\ref{fig:accuracy_variation_p} respectively illustrate the variations in z-scores and classification accuracy of the dataset under attacks as the parameter \( p \) changes. Table~\ref{tab:p_z_a} presents the specific values of z-scores and classification accuracy for the attacked datasets.

\begin{table}[h!]
  \centering
  \caption{Accuracy.}
  \label{tab:X_o_w}
  \begin{tabular}{@{}ccccccc@{}}
    \toprule
    Dataset & \( 0.5\sigma \)  & \( 1.0 \sigma \) & \( 1.5\sigma \) & \( 2.0\sigma \) & \( 2.5\sigma \) \\
    \midrule
    \(D_o\) & 0.977 & 0.977 & 0.977 & 0.977 & 0.977\\
    \(D_w\) & 0.976 & 0.973 & 0.965 & 0.956 & 0.952\\
    \bottomrule
  \end{tabular}
\end{table}

\begin{figure}[h]
    \centering
    \begin{subfigure}[b]{0.23\textwidth}
        \includegraphics[width=\linewidth]{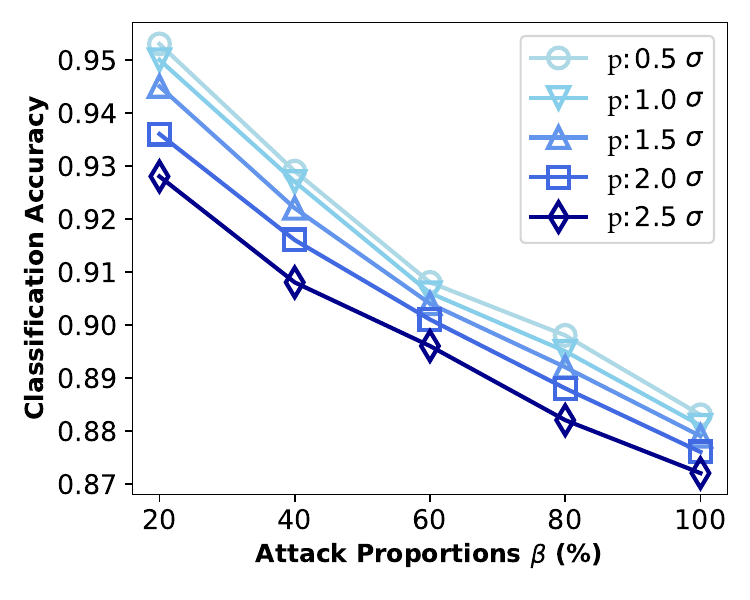}
        \caption{Accuracy}
        \label{fig:zscore_variation_p}
    \end{subfigure}%
    \hspace{.05in}
    \begin{subfigure}[b]{0.23\textwidth}
        \includegraphics[width=\linewidth]{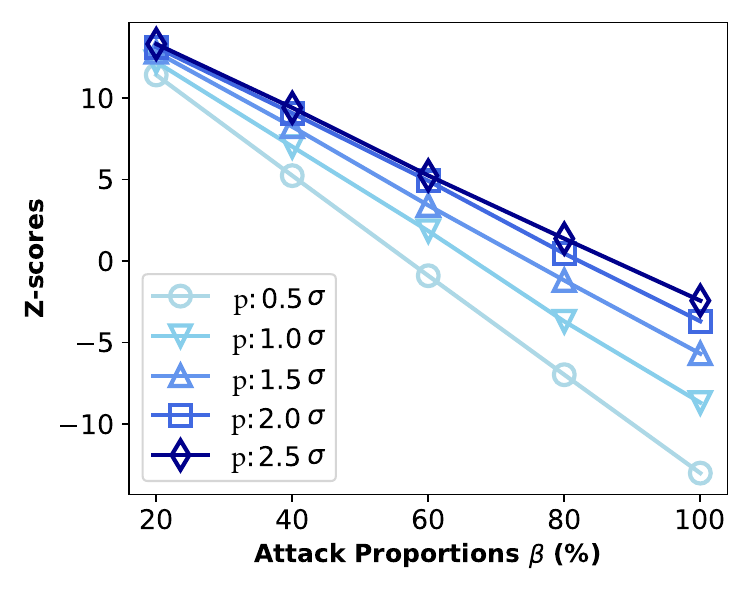}
        \caption{Z-scores}
        \label{fig:accuracy_variation_p}
    \end{subfigure}
    \caption{Accuracy and z-scores on \(\beta\)s.}
\end{figure}

\begin{table}[t]
 \caption{Z-scores and accuracy on \(p\)s.}
 \noindent\adjustbox{max width=\linewidth}{%
 \centering
 \begin{tabular}{lcccccccccc}\toprule
  \setlength{\tabcolsep}{10pt} % Default value: 6pt
\multirow{2}{*}{$p$} & \multicolumn{5}{c}{Z-score} & \multicolumn{5}{c}{Accuracy} \\
 \cmidrule(lr){2-6}\cmidrule(lr){7-11}
  & 20 & 40 & 60 & 80 & 100 & 20 & 40 & 60 & 80 & 100
\\
\midrule
    $0.5\sigma$ & 11.4 & 5.23 & -0.903 & -6.97 & -13.0 & 0.953 & 0.929 & 0.908 & 0.898 & 0.883
\\
    $1.0\sigma$ & 12.2 & 7.01 & 1.84 & -3.70 & -8.68 & 0.950 & 0.927 & 0.906 & 0.895 & 0.881
\\
    $1.5\sigma$ & 12.8 & 8.23 & 3.41 & -1.19 & -5.68 & 0.945 & 0.922 & 0.904 & 0.892 & 0.879
\\
    $2.0\sigma$ & 13.1 & 9.06 & 4.91 & 0.457 & -3.69 & 0.936 & 0.916 & 0.901 & 0.888 & 0.876
\\
    $2.5\sigma$ & 13.3 & 9.39 & 5.24 & 1.37 & -2.43 & 0.928 & 0.908 & 0.896 & 0.882 & 0.872
\\
  \bottomrule
  \end{tabular}
}
 % \label{tab:noiselabel}
 \label{tab:p_z_a}
\end{table}

With the increase in \( p \), the ML utility of the watermarked dataset degrades. In alteration attacks, it can be observed that as \( \beta \) increases, the z-score decreases rapidly. Meanwhile, the decrease in the utility of attacked datasets is more pronounced. With the increase in \( p \), the ML utility of the attacked dataset degrades. This is because as \( p \) increases, the perturbed data are more likely to fall within the green domains, forcing attackers to increase the rate of alteration, resulting in making the watermark unusable. The larger \( \beta \), the poorer the performance of the trained model. It is important to note that \( p \) cannot be increased indefinitely, as key cells may be identified as outliers.

\subsubsection{The Number of Key Cells $n_w$}
The number of key cells, denoted by \( n_w \), is also a crucial parameter. Here, we treat \( n_w \) as a variable, with values set at 100, 200, 300, 400, and 500. Other parameters are set as \( p = 2.0\sigma \) and \( k = 500 \). We also fix the alteration range of perturbation at \(2\sigma\). The attack proportions are set at \( 20\%, 40\%, 60\%, 80\%, \) and \( 100\% \). Table~\ref{tab:n_o_w} illustrates the impact of watermarking on the ML utility of the dataset, measured by the classification accuracy of the trained logistic regression model. Figure~\ref{fig:zscore_variation_n} and Figure~\ref{fig:accuracy_variation_n} respectively illustrate the variations in z-scores and classification accuracy of the dataset under attack as the parameter \( n_w \) changes. Table~\ref{tab:n_z_a} presents the specific values of z-scores and classification accuracy for the attacked datasets.

\begin{table}[h!]
  \centering
  \caption{Accuracy.}
  \label{tab:n_o_w}
  \begin{tabular}{@{}ccccccc@{}}
    \toprule
    Dataset & 100  & 200 & 300 & 400 & 500 \\
    \midrule
    \(D_o\) & 0.977 & 0.977 & 0.977 & 0.977 & 0.977\\
    \(D_w\) & 0.972 & 0.963 & 0.957 & 0.953 & 0.949\\
    \bottomrule
  \end{tabular}
\end{table}

\begin{figure}
    \begin{subfigure}{0.23\textwidth}
        \includegraphics[width=\linewidth]{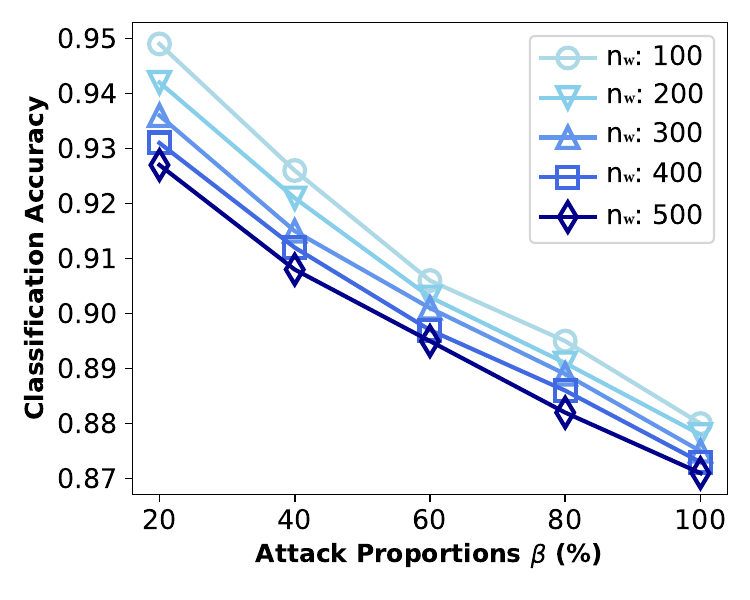}
        \caption{Accuracy}
        \label{fig:zscore_variation_n}
    \end{subfigure}%
    \hspace{.05in}
    \begin{subfigure}{0.23\textwidth}
        \includegraphics[width=\linewidth]{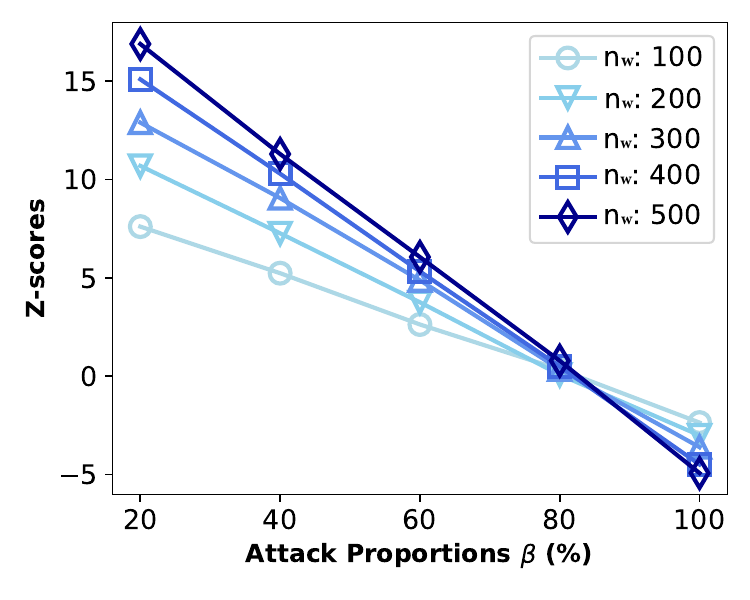}
        \caption{Z-scores}
        \label{fig:accuracy_variation_n}
    \end{subfigure}
    \caption{Accuracy and z-scores on \(\beta\)s.}
\end{figure}

\begin{table}[t]
 \caption{Z-scores and accuracy on \(n_w\)s.}
 \centering
\noindent\adjustbox{max width=\linewidth}{%
 \begin{tabular}{lcccccccccc}\toprule
    \setlength{\tabcolsep}{10pt} % Default value: 6pt
   \multirow{2}{*}{$n$} & \multicolumn{5}{c}{Z-score} & \multicolumn{5}{c}{Accuracy} \\
    \cmidrule(lr){2-6}\cmidrule(lr){7-11}
    & 20 & 40 & 60 & 80 & 100 & 20 & 40 & 60 & 80 & 100
\\
\midrule
    100 & 7.61 & 5.24 & 2.63 & 0.344 & -2.36 & 0.949 & 0.926 & 0.906 & 0.895 & 0.880 
\\
    200 & 10.7 & 7.27 & 3.76 & 0.0735 & -3.01 & 0.942 & 0.921 & 0.903 & 0.891 & 0.878 
\\
    300 & 12.9 & 9.06 & 4.88 & 0.351 & -3.61 & 0.936 & 0.915 & 0.901 & 0.889 & 0.875 
\\
    400 & 15.1 & 10.3 & 5.33 & 0.496 & -4.49 & 0.931 & 0.912 & 0.897 & 0.886 & 0.873 
\\
    500 & 16.9 & 11.3 & 6.06 & 0.785 & -4.93 & 0.927 & 0.908 & 0.895 & 0.882 & 0.871 
\\
  \bottomrule
 \end{tabular}
 }
 \label{tab:n_z_a}
\end{table}

It can be observed that as \( n_w \) increases, the ML utility of the watermarked dataset declines due to the increase of perturbed data, and there is a slight decrease in the classification performance of the trained model. Simultaneously, the cost of attack rises, as a larger \( n_w \) results in a higher z-score under the same attack intensity, making the watermark less susceptible to being erased.

\subsubsection{The Proportion of Green Domains \( \gamma \)}
This represents a unique trade-off, as in the watermark embedding and detection algorithms, the ratio of the length of green domains to the total length of the perturbation range, represented as \( \gamma \), is set at 0.50. This decision is based on the assumption that the deviation of cells in suspicious datasets from those in the original dataset falls within a certain range randomly. However, the impact of variations in \( \gamma \) on \ourmethod is worth exploring. Here, we treat \(gamma\) as a variable, with values set at \(0.25, 0.33, 0.5, 0.67, 0.75\), \(n_w = 300\). We also fix \(p = 1.5 \sigma \) and \(k = 500\). 
% The parameter \( k \) is set to a large value to allow green domains to simultaneously cover varying degrees of data noise for different \( \gamma \). 
We fix the alteration range at \(1.5 \sigma\). Table~\ref{tab:gamma_o_w} illustrates the impact of watermarking on the ML utility of the dataset, measured by the classification accuracy of the trained logistic regression model. Figures~\ref{fig:zscore_variation_gamma} and~\ref{fig:accuracy_variation_gamma} respectively illustrate the variations in z-scores and classification accuracy of the dataset under attack as the parameter \( \gamma \) changes. It can be observed that, as \( \gamma \) increases, the ML utility of the watermarked dataset remains nearly unchanged, indicating that variations in \( \gamma \) hardly alter the degree of data distortion. When attackers increase the attack proportion, the classification accuracy for different \( \gamma \) values tends to converge, while smaller \( \gamma \) values result in larger z-scores. This is dictated by the calculation formula of the z-score, indicating that smaller \( \gamma \) values can enhance the robustness of \ourmethod.
% Table~\ref{tab:gamma_z_a} presents the specific z-scores and classification accuracy for the attacked datasets.} 

\begin{table}[h!]
  \centering
  \caption{Accuracy.}
  \label{tab:gamma_o_w}
  \begin{tabular}{@{}ccccccc@{}}
    \toprule
    Dataset & 0.25  & 0.33 & 0.50 & 0.67 & 0.75 \\
    \midrule
    \(D_o\) & 0.958 & 0.958 & 0.958 & 0.958 & 0.958\\
    \(D_w\) & 0.958 & 0.958 & 0.956 & 0.957 & 0.957\\
    \bottomrule
  \end{tabular}
\end{table}

% \todo{same size figure 10, figure 11}
\begin{figure}
    \begin{subfigure}{0.23\textwidth}
        \includegraphics[width=\linewidth]{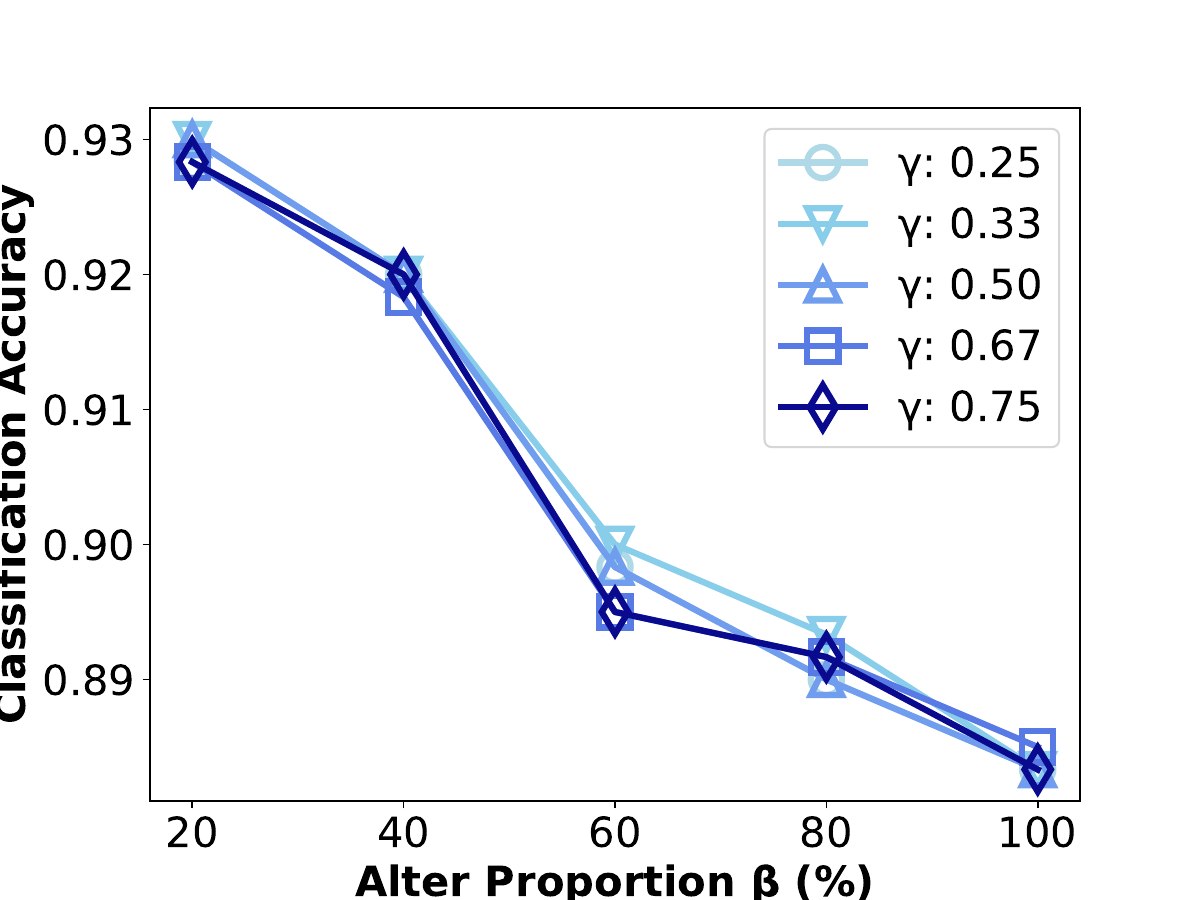}
        \caption{Accuracy.}
        \label{fig:zscore_variation_gamma}
    \end{subfigure}%
    \hspace{.05in}
    \begin{subfigure}{0.23\textwidth}
        \includegraphics[width=\linewidth]{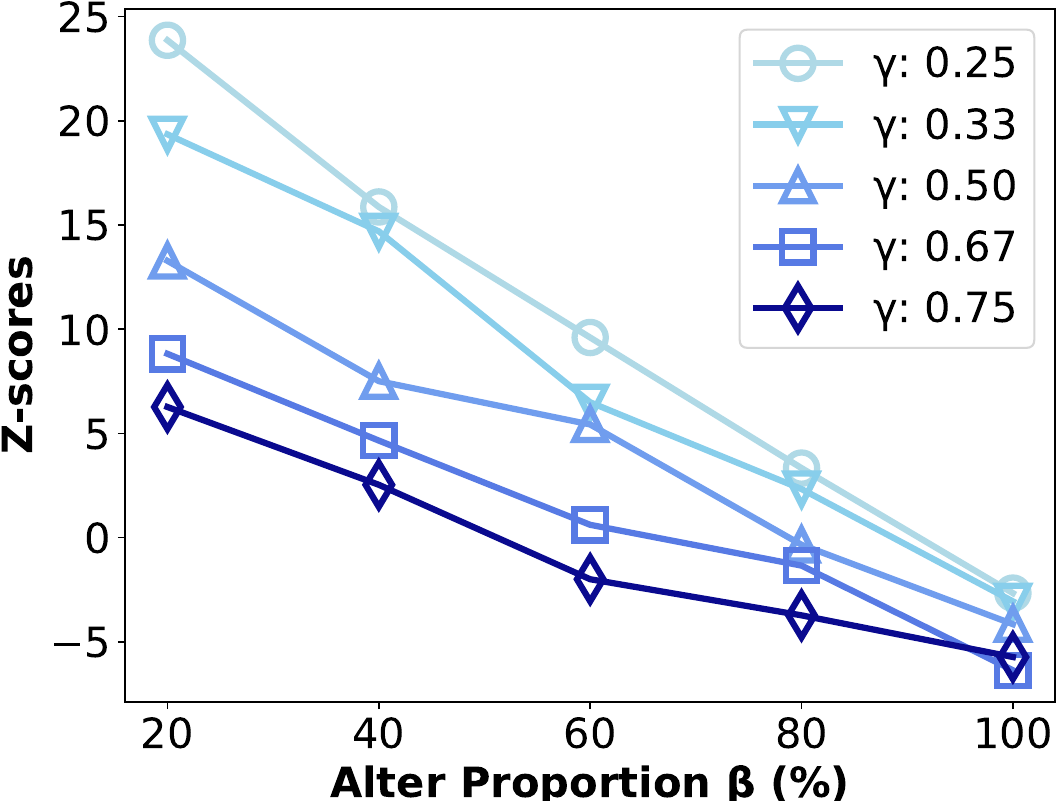}
        \caption{Z-scores.}
        \label{fig:accuracy_variation_gamma}
    \end{subfigure}
    \caption{Accuracy and z-scores on \(\beta\)s.}
    % \vspace{-0.5cm}
\end{figure}

% \todo{adjust figure 11}

% \begin{table*}[t]
%  \caption{$\gamma$ with corresponding z-scores and accuracies}
%  \centering
% % \noindent\adjustbox{max width=\linewidth}{%
%  \begin{tabular}{lcccccccccc}\toprule
%     \setlength{\tabcolsep}{10pt} % Default value: 6pt
%    \multirow{2}{*}{$\gamma$} & \multicolumn{5}{c}{Z-score} & \multicolumn{5}{c}{Accuracy} \\
%     \cmidrule(lr){2-6}\cmidrule(lr){7-11}
%     & 20 & 40 & 60 & 80 & 100 & 20 & 40 & 60 & 80 & 100
% \\
% \midrule
%     1/4 & 23.9 & 15.9 & 9.60 & 3.33 & -2.69 & 0.928 & 0.920 & 0.898 & 0.890 & 0.883 
% \\
%     1/3 & 19.4 & 14.7 & 6.49 & 2.33 & -3.06 & 0.930 & 0.920 & 0.900 & 0.893 & 0.883 
% \\
%     1/2 & 13.3 & 7.51 & 5.43 & -0.346 & -4.16 & 0.930 & 0.920 & 0.898 & 0.890 & 0.883 
% \\
%     2/3 & 8.82 & 4.65 & 0.612 & -1.35 & -6.37 & 0.928 & 0.918 & 0.895 & 0.892 & 0.885 
% \\
%     3/4 & 6.27 & 2.53 & -2.00 & -3.73 & -5.73 & 0.928 & 0.920 & 0.895 & 0.892 & 0.883 
% \\
%   \bottomrule
%  \end{tabular}
%  % }
%  \label{tab:gamma_z_a}
% \end{table*}

To further investigate the trade-off afforded by \( \gamma \), we tested the z-scores of different watermarking schemes on the original dataset. The average results are presented in Table~\ref{tab:gamma_z_original}. It can be observed that when \( \gamma \) is either very small or very large, the z-scores tend to be larger. This suggests that when the threshold is low, there is a potential risk of false positives, where an unwatermarked original dataset might be misidentified as watermarked. This occurs because if the green cells of the original dataset deviate slightly from the expected values when \( \gamma \) is small, it can result in a disproportionately large z-score. Conversely, when \( \gamma \) is large, the key cells of the original dataset naturally fall within the green domains with a higher probability, leading to larger z-scores as well. Therefore, selecting \( \gamma = 0.50\) may represent a more balanced choice.

\subsubsection{Conclusion}
Increasing the size of $p$ and $n_w$ can enhance the robustness of the watermark, but it may compromise non-intrusiveness. We suggest that data owners adjust these two parameters based on their actual needs. Additionally, keeping \( \gamma \) at \( \frac{1}{2} \) helps to reduce the potential risk of false positives so we suggest data owners use it.

\begin{table}[h!]
  \centering
  % \vspace{-0.25cm}
  \caption{Z-scores.}
  \begin{tabular}{@{}ccccccc@{}}
    \toprule
    Dataset & 0.25  & 0.33 & 0.50 & 0.67 & 0.75 \\
    \midrule
    \(D_o\) & 1.47 & 0.489 & 0.346 & 0.122 & 0.933\\
    \bottomrule
  \end{tabular}
  \label{tab:gamma_z_original}
  % \vspace{-0.75cm}
\end{table}

\blueno{}

% When \( \gamma \) exceeds \( \frac{1}{2} \), the range of green domains within \([-p, p]\) surpasses that of the red domain, which contravenes the principle of watermark fairness, as attackers might perceive this as a violation of randomness. Conversely, when \( \gamma \) is less than \( \frac{1}{2} \), attackers can argue that some unscrupulous data owners may flip green and red domains to claim ownership, treating what is originally red as green, potentially leading to disputes. Therefore, we consider \( \gamma = \frac{1}{2} \) to be a fair and dispute-avoidant ratio.

% \todo{move to discussion}
\subsection{Discussion} \label{subsec:rq6}
In this section, we discuss an optimization strategy with a new perturbation noise selection method to enhance the non-intrusiveness of \ourmethod. 

In Line 6 of Algorithm~\ref{alg:watermark_generation_numerical}, we randomly select a number from the green domains for perturbation. However, this uniform random selection strategy may pick up a fair amount of substantial noise to perturb the original dataset, which could potentially have a noticeable impact on the ML utility of the data. To address this issue, we improve the original strategy by adopting the normal distribution probability. This means that numbers closer to the original value have a higher probability of being selected, which is intuitively reasonable as it implies that larger deviations are less likely to occur. We characterize this transition with the corresponding PDF. In the watermark embedding phase, we continuously sample from the probability density function on \([-p, p]\) until the sample falls within the green domains.

The PDF for uniform random selection is given by:
% \blueno{
% \[
% f(x) = 
% \begin{cases} 
%   \frac{1}{2p}, & \text{if } -p \leq x \leq p  \text{ and } x \in \text{{green domains}} \\
%   0, & \text{otherwise}
% \end{cases}
% \]
% }

% \blueno{
% The pdf of normal distribution selecting:
% \[
% f(x) = \begin{cases}
% \frac{1}{\sigma\sqrt{2\pi}} \exp\left(-\frac{x^2}{2\sigma^2}\right) \bigg/ \left(\Phi(p) - \Phi(-p)\right), & \parbox{2.5cm}{if $-p \leq x \leq p$ and $x \in$ green domains} \\
% 0, & \text{otherwise}
% \end{cases}
% \]
% }

\begin{equation*}
f(x) = 
\begin{cases} 
  \frac{1}{2p}, & \text{if } -p \leq x \leq p,  \\
  0, & \text{otherwise}.
\end{cases}
\label{eq:uniform_pdf}
\end{equation*}

The PDF for selection based on a normal distribution is given as:

\begin{equation*}
f(x) = 
\begin{cases} 
  \frac{1}{\sigma\sqrt{2\pi}} \exp\left(-\frac{(x-\mu)^2}{2\sigma^2}\right) \bigg/ \left(\Phi(p) - \Phi(-p)\right), & \text{if } -p \leq x \leq p, \\
  0, & \text{otherwise}.
\end{cases}
\label{eq:normalized_pdf}
\end{equation*}

We continue experiments on the Boston Housing Prices dataset. Apart from the strategy for noise selection, all other experimental settings remain the same as those described in Subsection~\ref{subsec:rq3}. We investigate whether the ML utility of the new watermarked dataset, with the improved strategy for noise selection, shows improvement compared to the original watermark dataset. Additionally, we examine the robustness of the new watermark dataset against alteration attacks, with the attack strategy being the same as before. Table~\ref{tab:non-random_per} displays the results for the original dataset, the watermarked dataset with uniform random perturbation, and the watermarked dataset with random perturbation based on a normal distribution probability.  Table~\ref{tab:non-random_per_attacked} displays the results for the new watermark dataset under alteration attacks.

\begin{table}[h!]
  \centering
  \caption{MSEs on \(D_o\) and \(D_w\)s .}
  \label{tab:non-random_per}
  \begin{tabular}{@{}cccc@{}}
    \toprule
    Dataset & $D_o$ & Completely Random & Normal Distribution \\
    \midrule
    MSE & 24.8 & 25.6  & 25.0\\
    \bottomrule
  \end{tabular}
  % \vspace{-0.5cm}
\end{table}

\begin{table}[h!]
  \centering
  \caption{Z-scores and MSEs.}
  \label{tab:non-random_per_attacked}
  \begin{tabular}{@{}ccccccc@{}}
    \toprule
    Alteration Attack ($$\%$$) & 20 & 40 & 60 & 80 & 100 \\
    \midrule
    Z-score & 5.46 & 3.88 & 2.18 & 0.794 & -0.690  \\
    MSE & 25.9 & 27.1 & 34.9 & 33.8 & 41.8  \\
    \bottomrule
  \end{tabular}
  \vspace{0.5cm}
\end{table}

The experimental results indicate that perturbing data based on sampling from a normal distribution probability effectively reduces the impact of watermarking on the ML utility of the dataset while still maintaining robustness.
\vspace{+1em}

\section{Conclusion}\label{sec:conclusion}
In this paper, we propose a simple yet effective watermarking scheme for tabular datasets in ML, \ourmethod, showing superior detectability, non-intrusiveness, and robustness. \ourmethod is inspired by the randomness in the deviation between a suspicious dataset and the corresponding original dataset. When embedding, \ourmethod artificially divides the data with deviations from the original data into two divisions and selects from one division to perturb certain cells in the original dataset. One proportion z-test is adopted in detection to identify these intentional perturbations and determine the ownership of the dataset. To detect the watermark in datasets where some tuple sequences are inconsistent with the original dataset, we propose a matching algorithm to locate tuples containing key cells. Experimental results on real and synthetic datasets validate the effectiveness of \ourmethod and demonstrate its applicability to various ML datasets and models.

% There are several interesting directions for future research. 
% Though \ourmethod effectively protects the ownership of tabular datasets, the requirement for access to the original dataset when detecting watermarks is not conducive to the widespread use of \ourmethod. It would be interesting to use privacy computing technologies such as secure multiparty computation to avoid the access to the original dataset. 
There are several related and practical directions worthy of future research. 
While \ourmethod provides a general minimal effect in the performance of various ML models trained on tabular datasets, minimizing the performance effect for a specific ML model remains a pivotal challenge.
Meanwhile, considering more possible attack methods, supporting more types of datasets, and enlarging the scope of the threat model can be further studied.
% 1) how does the watermarking scheme perform against other possible attack methods and strategies apart from the types of attacks mentioned in this paper? 2) how can we extend the algorithm to other types of datasets?
% While ourTMC-basedP-Shapleyestimationisefficienttoacertaindegree,developingalgorithmsthatenabletheP-Shapleyvaluetobe effectivelyappliedtolargedatasetsremainsapivotalchallenge. Meanwhile,exploringmoreutilityfunctionsforclassificationevaluationandapplyingtheideaofrawprobabilitytofeature-level modelinterpretabilitycanbefurtherstudied.

% Nonetheless, the watermarking approach has its limitations, such as the non-blind nature of the detection algorithm. In future work, we aim to develop a blind watermark detection method.

\newpage
\balance
% %%
% %% The next two lines define the bibliography style to be used, and
% %% the bibliography file.
\bibliographystyle{ACM-Reference-Format}
\bibliography{watermark}

\end{document}